\newtheorem{dfn}{Definition}
\newtheorem{lem}{Lemma}
\newtheorem{thm}{Theorem}
\newtheorem{exm}{Example}
\newtheorem{rmk}{Remark}
\newtheorem{prop}{Proposition}
\title{\bf \mbox{\boldmath{$D$}}-dimensional cellular automata provide Salem's singular function \mbox{\boldmath{$L_{\alpha}$}} with \mbox{\boldmath{$\alpha = 1/(2 D +1)$}} and \mbox{\boldmath{$1/(2^D+1)$}}}
\author{Akane Kawaharada\footnote{E-mail: aka@kyokyo-u.ac.jp, Postal address: 1, Fujinomoricho, Fukakusa, Fushimi-ku, Kyoto-shi, Kyoto, 612-8522, Japan} \vspace{2mm}\\
Department of Mathematics, Kyoto University of Education}
\begin{document}

\maketitle

\begin{abstract}
Salem's singular function is strictly increasing, continuous, and has a derivative equal to zero almost everywhere in $[0,1]$; it is also known as de Rham's singular function or Lebesgue's singular function.
The parameter of Salem's singular function $L_{\alpha}$ is $\alpha \in (0, 1)$ and $\alpha \neq 1/2$.
Our previous studies have shown that for some cases of which the limit set of spatio-temporal pattern of a cellular automaton (CA) is fractal, Salem's singular function with $\alpha = 1/3$, $1/4$, or $1/5$ is given by projecting the pattern onto the time axis.
However, it remained unclear whether there exists a CA that gives Salem's singular function with a parameter $\alpha$ equal to the multiplicative inverse of an integer greater than $5$.
In this paper, we construct CAs giving Salem's singular function with $\alpha= 1/(2D+1)$ and $\alpha = 1/(2^D+1)$ for each dimension $D \geq 1$.
This implies that there exist CAs that give Salem's function with a parameter $\alpha$ equal to the multiplicative inverse of any integer greater than or equal to $3$.
We also present the results of numerical experiments showing that for $D \leq 5$, the functions given by $D$-dimensional linear symmetric $2$-state radius-$1$ CAs other than the above two types cannot be Salem's function with $\alpha = 1/M$ for $M \in {\mathbb Z}_{\geq 3}$.
In addition to the square lattice, the triangular and hexagonal lattices can be considered as regular lattices in the two-dimensional plane, and we also discuss functions obtained from CAs on these lattices.
%
% ( 247 words )
%
%Salemの特異関数は　strictly increasing, continuous, and its derivative is zero almost everywhere in the interval $[0,1]$ となる関数であり、別名、de Rhamの特異関数やLebesgueの特異関数などとも呼ばれている。
%Salemの特異関数$L_{\alpha}$のパラメータは$\alpha \in (0, 1)$, and $\alpha \neq 1/2$である。
%これまでの著者らの研究により、CAの時間発展パターンの極限がフラクタルになっている場合、時間発展パターンを時間軸に射影することで与えた関数が
%Salemの特異関数$\alpha = 1/3$, $1/4$, $1/5$ となる場合は示されていた。
%しかし分母が$5$より大きい場合のSalemの特異関数を与えるようなCAが存在するか否かは不明であったため、
%本論文ではSalemの特異関数のパラメータの分母が$3$以上の任意の奇数となるような関数を与えるCAの存在を示す。
%各次元$D$に対して$\alpha = 1/(2D+1)$のSalemの特異関数を与えるCAと、$\alpha = 1/(2^D+1)$のSalemの特異関数を与えるCAを構成する。
%また数値実験結果として上述の2種以外の線型で半径1で対称性をもつ2状態CAが与える関数が$\alpha=1/M$ for $M \in {\mathbb Z}_{\geq 3}$となり得ないことを示す。
%さらに二次元平面の格子として、正方格子以外に三角格子と六角格子が考えられるが、これらの格子上のCAから得られる関数についても論じる。
\end{abstract}

\hspace{2.5mm} 
{\it Keywords} : cellular automaton, fractal, singular function\footnote{AMS subject classifications: $26A30$, $28A80$, $37B15$, $68Q80$}

%%%%%%%%%%%%%%%%%%%%%%%%%%%%%%%%%%%%%
%%%%%%%%%%%%%%%%%%%%%%%%%%%%%%%%%%%%%
\section{Introduction}
\label{sec:intro}

Cellular automata (CAs) are discrete mathematical models that can generate complex behaviors even from simple rules.
CAs are also useful as fractal generators, because orbits from the single site seed are often self-similar or partially self-similar. 
Such orbits have been studied by many researchers, 
for example, \cite{willson1984, culik1989, takahashi1992, haeseler1993}, and we can see many fractal patterns by CAs in \cite{wolfram2002}.
In general, fractals are often characterized by a fractal dimension. 
However, because the dimensionality of fractals is given by a single value, the same value may be given to different fractals.
Therefore, we propose a method for classifying fractals generated by these CAs using a function that describes the dynamics of the number of nonzero states of the spatio-temporal pattern \cite{kawa2022P}.

In this study, we characterize fractals in more detail by projecting the spatio-temporal pattern of a CA onto the time axis to yield a function when the limit of the spatio-temporal pattern is a fractal.
Especially for fractals generated by high-dimensional CAs, this function makes it possible to reduce the dimensionality of the fractal and analyze it.
High-dimensional fractals have a wide range of applications, for example, in image analysis \cite{jacquin1992, wohlberg1999} and fractal blinding \cite{sakai2012} in engineering, models of blood vessels and intestinal structure in biology \cite{family1989, roy2004}, as well as financial models \cite{mandelbrot2005, fernandez2019} in finance, etc.
Visualizing high-dimensional patterns accurately is difficult and costly.
Therefore, a new means of high-dimensional fractal analysis using a function that preserves more information about the original fractal is needed, as opposed to a fractal dimension that characterizes the pattern with a single numerical value.
Furthermore, in this study, we aim to characterize and classify `pathological functions'.
Pathological functions are those with deviant, irregular, or counterintuitive properties, although the term is not rigorously defined mathematically.
Some examples are well-known, such as the Cantor function, which is continuous but not absolutely continuous, the Takagi function, which is continuous but non-differentiable everywhere, and Thomae's function, which is discontinuous at rational points and continuous at irrational points \cite{cantor1884, takagi2014, thomae1875, kharazishvili2017}.
A singular function may be described as pathological, because they are non-constant, continuous, differentiable almost everywhere, and their derivative is zero.
Our previous studies have shown that functions derived from the spatio-temporal patterns of CAs can be pathological functions, such as singular functions \cite{kawanami2019, kawanami2020, kawa2022} or discontinuous Riemann-integrable functions \cite{kawa202103}.
In the future, by considering functions for many types of CAs, we expect it to become possible to characterize and classify pathological functions in association with fractals.

The spatio-temporal patterns of linear CAs are self-similar or partially self-similar, and the number of nonzero states generating the pattern is countable.
The authors counted the number of nonzero states at each time step and the cumulative number of nonzero states in the spatio-temporal pattern from the initial configuration with only a single positive state for several elementary CAs \cite{kawanami2020}.
Based on the result of counting the number of nonzero states, the dynamics of the number of states is normalized, and the function is given by taking the limit. 
This corresponds to projecting the limit set generated by each CA onto the time axis and expressing it as a function of a single variable.
In \cite{kawa2022}, the author showed that for a one-dimensional elementary CA Rule $150$, the resulting function is a new kind of strictly increasing singular function. 
In \cite{kawa2022P}, the results obtained for one-dimensional and two-dimensional elementary CAs that generate symmetric patterns are summarized. 
The functions obtained from these spatio-temporal patterns were found to be strictly increasing singular functions, and we obtained sufficient conditions for a function to be a strictly increasing singular function by extracting their common properties.
In particular, for the one-dimensional elementary CA Rule $90$ and a two-dimensional elementary CA, the functions obtained from their spatio-temporal patterns are self-affine functions called Salem's singular function $L_{\alpha}$ with parameters $\alpha= 1/3$ and $\alpha=1/ 5$, respectively. 
Including nonlinear CAs, we found the function for a two-dimensional nonlinear elementary CA to be $L_{1/4}$ \cite{kawanami2020}.

Salem's singular function is strictly increasing and continuous, and its derivative is zero almost everywhere in the interval $[0,1]$. 
Salem's singular function is also known as de Rham's singular function or Lebesgue's singular function \cite{lomnicki1934, salem1943, derham1957, takacs1978, yhk1997}.
Its mathematical properties and relationship to natural phenomena have been studied by many researchers.
The relationship between this function and the Takagi function is shown in \cite{hatayama1984}, and its relationship with complex dynamical systems is discussed in \cite{sumi2009, sumi2010}.
The set of points where the derivative of this function is zero and infinity is discussed in \cite{kawamura2011}.
Salem's singular function has been studied in connection with various other fields, including in research on gambling \cite{billingsley1983, stroock2000, neidinger2016}, the digital sum problem \cite{okada1995, kruppel2009, kruppel2011}, and physics-related studies \cite{takayasu1984, tasaki1993, bernstein1993}.
The parameter of Salem's singular function $L_{\alpha}$ is $\alpha \in (0, 1)$, where $\alpha \neq 1/2$.
Our previous studies have shown that the functions given by the spatio-temporal patterns of CAs are Salem's singular function $\alpha = 1/3$, $1/4$, and $1/5$, but whether there exists a CA that gives Salem's singular function $L_{1/M}$ for $M \in {\mathbb Z}_{> 5}$ remained unclear.
In the present work, we show that there exists a CA that gives a function $L_{1/M}$ for any odd number $M$ greater than or equal to $3$.
The CAs considered in this paper are equipped with two states $\{0 ,1\}$, on a $D$-dimensional square lattice ${\mathbb Z}^D$ having linear transition rules with radius $1$, and holding some symmetries.
Here, we adopt the simplest state set and the shortest radius of the nontrivial CAs, but their dimension $D$ takes any positive integer.
For each dimension $D$, we construct a CA that gives Salem's singular function with the parameter $\alpha= 1/(2D+1)$.
Furthermore, we also construct a CA that gives the singular function with $\alpha = 1/(2^D+1)$.
We present the results of numerical experiments showing that the function $L_{1/M}$ for $M \in {\mathbb Z}_{\geq 3}$ cannot be given by a $D$-dimensional linear symmetric $2$-state radius-$1$ CA other than the above two types for $2 \leq D \leq 5$.
In addition to the square lattice, the triangular and hexagonal lattices are also considered as regular lattices in the two-dimensional plane, and we also discuss functions obtained from CAs on these lattices.

The remainder of this paper is organized as follows. 
Section~\ref{sec:pre} presents some preliminaries on CAs and a singular function. 
For $D$-dimensional linear symmetric $2$-state radius-$1$ CAs $F_D$ and $G_D$, Section~\ref{sec:main} reports our main results that the functions representing the number of nonzero states in the spatio-temporal patterns are Salem's singular function with the parameter $\alpha = 1/(2D+1)$ and $1/(2^D+1)$. 
In Section~\ref{sec:other}, we provide some numerical results on $D$-dimensional linear symmetric $2$-state radius-$1$ CAs except $F_D$ and $G_D$, and CAs on the triangular and hexagonal lattices in the two-dimensional plane.
Finally, Section~\ref{sec:cr} discusses our findings and highlights some possible avenues for future research.

\section{Preliminaries}
\label{sec:pre}

Here, we define CAs and a singular function. 

\begin{dfn}
Let $\{0, 1\}$ be a binary state set.
We define a $D$-dimensional configuration space $\{0, 1\}^{{\mathbb Z}^D}$ for $D \in {\mathbb Z}_{>0}$, and each element $u \in \{0, 1\}^{{\mathbb Z}^D}$ is called a configuration.
A $D$-dimensional cellular automaton $(\{0, 1\}^{{\mathbb Z}^D}, T)$ for ${\textit{\textbf i}} \in {\mathbb Z}^D$ is given by
\begin{align}
(T u)_{\textit{\textbf i}} &= f ( u_{{\textit{\textbf i}}+v_1}, u_{{\textit{\textbf i}}+v_2}, \ldots, u_{{\textit{\textbf i}}+v_l}),
\end{align}
where $v_1, v_2, \ldots, v_l \in {\mathbb Z}^D$ are finite pairwise distinct vectors and $f : \{0, 1\}^l \to \{0, 1\}$ for $l \in {\mathbb Z}_{>0}$ is a local rule.
\end{dfn}

Next, we provide some properties of CAs.

\begin{dfn}
\begin{enumerate}
\item The radius of a CA is given by $\max_{1 \leq j \leq l} |v_j|$. 
We refer to a CA as radius-$1$ if the local rule satisfies $\max_{1 \leq j \leq l} |v_j| =1$.
\item We refer to a CA as linear if the local rule is given by 
\begin{align}
f ( u_{{\textit{\textbf i}}+v_1}, u_{{\textit{\textbf i}}+v_2},  \ldots, u_{{\textit{\textbf i}}+v_l}) = \sum_{j=1}^l a_j u_{{\textit{\textbf i}}+v_j} \pmod 2
\end{align}
for $a_j \in \{0, 1\}$.
Then, a radius-$1$ CA is linear if the local rule is given by 
\begin{align}
& f ( u_{{\textit{\textbf i}}+(-1, \ldots, -1, -1)}, u_{{\textit{\textbf i}}+(-1, \ldots, -1, 0)}, \ldots, u_{{\textit{\textbf i}}+(1, \ldots, 1, 1)}) \nonumber \\
&= \sum_{e_1=-1}^1 \sum_{e_2=-1}^1 \cdots \sum_{e_D=-1}^1 a_{e_1, e_2, \ldots, e_D} u_{i_1+e_1, i_2+e_2, \ldots, i_D+e_D} \pmod 2.
\end{align}
\item We refer to a linear radius-$1$ CA as symmetric if the local rule holds both reflection symmetry for each axis and discrete rotational symmetry of the $4$th order for each plane given by two axes, i.e., 
\begin{align}
& a_{e_1, e_2, \ldots, e_D} = a_{-e_1, e_2, \ldots, e_D} = a_{e_1, -e_2, e_3, \ldots, e_D} = \cdots = a_{e_1, \ldots, -e_D} \mbox{ and} \label{eq:axis}\\ 
& a_{e_1, e_2, e_3, \ldots, e_D} = a_{e_2, -e_1, e_3, \ldots, e_D} = a_{e_1, e_3, -e_2, \ldots, e_D} = \cdots = a_{e_1, \ldots, e_{D-2}, e_D, -e_{D-1}} \label{eq:rotate}
\end{align}
for $e_1, e_2, \ldots, e_D \in \{-1, 0, 1\}$.
\end{enumerate}
\end{dfn}

Let ${\mathcal C}(D)$ be the set of $D$-dimensional linear symmetric $2$-state radius-$1$ CAs.

\begin{prop}
\label{prop:2d+1}
The number of CAs belonging to ${\mathcal C}(D)$ is $2^{D+1}$ for each $D \in {\mathbb Z}_{>0}$.
\end{prop}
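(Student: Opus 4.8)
The plan is to count the free parameters $a_{e_1,\dots,e_D}$ after imposing the two symmetry relations~\eqref{eq:axis} and~\eqref{eq:rotate}. A coefficient is indexed by a vector $(e_1,\dots,e_D) \in \{-1,0,1\}^D$, so the full (unconstrained) linear radius-$1$ rule has $3^D$ coefficients. The reflection symmetry~\eqref{eq:axis} says each coefficient depends only on the absolute values $(|e_1|,\dots,|e_D|)$, i.e.\ only on which coordinates are zero and which are nonzero; equivalently, the coefficient depends only on the \emph{support} $S = \{\,k : e_k \neq 0\,\} \subseteq \{1,\dots,D\}$. The rotational symmetry~\eqref{eq:rotate} (the order-$4$ rotations in each coordinate plane) generates, together with the reflections, the full hyperoctahedral action permuting coordinates arbitrarily; hence after both symmetries a coefficient depends only on the \emph{size} $|S|$ of its support. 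Thus the number of free parameters equals the number of possible support sizes, namely $|S| \in \{0,1,\dots,D\}$, giving $D+1$ parameters, each free in $\{0,1\}$, and hence $2^{D+1}$ CAs. That is exactly the claimed count.

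First I would make precise the claim that~\eqref{eq:axis} forces dependence only on the support: given any sign pattern, repeatedly applying the single-axis flips $e_k \mapsto -e_k$ shows $a_{e_1,\dots,e_D} = a_{|e_1|,\dots,|e_D|}$, so the orbit of a coefficient under the reflection group $(\mathbb{Z}/2)^D$ is determined by the support $S$. Next I would show that the rotations in~\eqref{eq:rotate} act on supports as transpositions of coordinates: the order-$4$ rotation in the $(k,k{+}1)$-plane sends $(e_k,e_{k+1})$ to $(e_{k+1},-e_k)$, which in particular maps a nonzero-in-position-$k$, zero-in-position-$(k{+}1)$ pattern to the transposed one, and composing with reflections realizes the adjacent transposition $(k\ \ k{+}1)$ on the set of supports. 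Since adjacent transpositions generate the symmetric group $S_D$, any two subsets of $\{1,\dots,D\}$ of the same cardinality lie in one orbit, while cardinality is obviously an invariant. Therefore the orbits of coefficients under the group generated by~\eqref{eq:axis} and~\eqref{eq:rotate} are in bijection with $\{0,1,\dots,D\}$.

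Finally, I would conclude: a CA in $\mathcal{C}(D)$ is uniquely determined by choosing, for each $m \in \{0,1,\dots,D\}$, the common value in $\{0,1\}$ of all coefficients whose support has size $m$; these choices are independent, so $|\mathcal{C}(D)| = 2^{D+1}$. One should also check there is no hidden collapse — i.e.\ that distinct parameter choices give distinct local rules — which holds because the monomials $\sum_{|S|=m} u_{\textit{\textbf i}+e(S)}$ for distinct $m$ involve disjoint sets of neighbor-variables, so no cancellation mod $2$ can identify two rules.

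The main obstacle is the second step: verifying carefully that the rotation relations~\eqref{eq:rotate}, combined with the reflections, generate the full coordinate-permutation action on supports (rather than, say, only even permutations or only cyclic ones). Once the group is pinned down as (at least) $S_D \ltimes (\mathbb{Z}/2)^D$ acting in the obvious way on $\{-1,0,1\}^D$, and the orbit invariant is identified as the support size, the counting is immediate; but making the orbit computation rigorous — especially confirming that the order-$4$ rotations do not identify coefficients of different support sizes (they cannot, since a rotation permutes coordinates and hence preserves $|S|$) — is where the care is needed.
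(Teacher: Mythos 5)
Your proposal is correct and follows essentially the same route as the paper's proof: reflections reduce the $3^D$ coefficients to dependence on the support, rotations (with reflections) reduce further to dependence on the support size $\sum_j |e_j|$, yielding $D+1$ independent binary parameters and hence $2^{D+1}$ CAs. You spell out the group-theoretic step (adjacent transpositions generating $S_D$) and the injectivity of the parametrization more explicitly than the paper does, but the argument is the same.
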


\begin{proof}
For a $D$-dimensional radius-$1$ CA, the local rule depends on $3^D$ states of neighboring cells.
Because of axisymmetries of the local rule by Equation~\eqref{eq:axis}, 
it suffices to consider only the following two cases, $e_j =0$ and $e_j=1$ of the coefficient $a_{e_1, e_2, \ldots, e_D}$ for each $1 \leq j \leq D$.
Hence, we have $2^D$ types of the coefficients $a_{e_1, e_2, \ldots, e_D}$s.
In addition, because of rotational symmetries of the local rule by Equation~\eqref{eq:rotate},
the coefficients $a_{e_1, e_2, \ldots, e_D}$s are the same 
when $(e_j, e_{j+1}) = (0, 1)$ and $(1, 0)$, 
when $(e_j, e_{j+1}) = (1, 0)$ and $(0, -1)$, or 
when $(e_j, e_{j+1}) = (1, 1)$ and $(1, -1)$, for each $1 \leq j < D$.
Considering both the axisymmetries and the rotational symmetries, 
the coefficients $a_{e_1, e_2, \ldots, e_D}$s are the same when $\sum_{i=1}^D |e_i|$ matches.
Therefore, we obtain $D+1$ types of $a_{e_1, e_2, \ldots, e_D}$s.
Because each cell has $0$ or $1$ as a state for $2$-state CAs, we have $2^{D+1}$ $D$-dimensional linear symmetric $2$-state radius-$1$ CAs.
\end{proof}

\begin{rmk}
\label{rmk:triv}
For ${\mathcal C}(D)$, two of the CAs have only trivial orbits because $(T^n u)_{\textit{\textbf i}} = 0$ and $(T^n u)_{\textit{\textbf i}}=u$ for any time step $n \in {\mathbb Z}_{>0}$.
Thus, we consider the other $2^{D+1}-2$ $D$-dimensional linear symmetric $2$-state radius-$1$ CAs for each $D \in {\mathbb Z}_{>0}$. 
\end{rmk}

We define the configuration $u_o \in \{0,1\}^{{\mathbb Z}^D}$ by
\begin{align}
\label{eq:sss}
(u_o)_{{\textit{\textbf i}}} = \left\{
\begin{array}{l l}
1 & \mbox{if ${\textit{\textbf i}} = {\bf 0} \ ( = (0, 0, \ldots, 0))$},\\
0 & \mbox{if ${\textit{\textbf i}} \in {{\mathbb Z}^D} \backslash \{ \bf 0 \}$},
\end{array}
\right. 
\end{align} 
for ${\textit{\textbf i}} \in {\mathbb Z}^D$. 
We refer to $u_o$ as the single site seed.
In this study, we investigate the orbits of CAs from the single site seed $u_o$ as an initial configuration.
For a $D$-dimensional CA $(\{0, 1\}^{{\mathbb Z}^D}, T)$, let $num_T (n)$ be the number of $1$-states in a spatial pattern $T^n u_o$, and $cum_T (n)$ be the number of $1$-states in a spatio-temporal pattern $\{T^m u_o\}_{m=0}^n$.
Then, 
\begin{align}
num_T (n) = \sum_{{\textit{\textbf i}} \in {\mathbb Z}^D} (T^n u_o)_{\textit{\textbf i}}, \
cum_T (n) = \sum_{m = 0}^n \sum_{{\textit{\textbf i}} \in {\mathbb Z}^D} (T^m u_o)_{\textit{\textbf i}}.
\end{align}

%Next, we introduce previous results about the limit sets of linear CAs.
%
%For a CA $(\{0, 1\}^{{\mathbb Z}^D}, T)$, a subset of a $(D+1)$-dimensional Euclidean space $V_T(n)$ is given by
%\begin{align}
%V_T(n) = \{ ({\textit{\textbf i}}, m) \in {\mathbb Z}^{D+1} \mid (T^m u_o)_{\textit{\textbf i}} > 0, 0 \leq m < n \},
%\end{align}
%which consists of nonzero states from time step $0$ to $n-1$.
%Let $V_T(n)/n$ be a contracted set of $V_T(n)$ with a contraction rate of $1/n$. 
%A limit set of a CA is defined by $\lim_{n \to \infty} (V_T(n)/n)$ if it exists. 
%
%\begin{thm}[\cite{takahashi1992}]
%\label{thm:tkhs01}
%Let $p$ be a prime number and $m \in {\mathbb Z}_{>0}$.
%For a $p^m$-state linear CA, if $p^{m-1}$ divides time step $n$, then $(T^{pn} u_o)_{p {\textit{\textbf i}}} = (T^n u_o)_{\textit{\textbf i}}$ for ${\textit{\textbf i}} \in {\mathbb Z}^D$.
%If $p^m$ divides $n$ and at least one of the elements of ${\textit{\textbf i}}$ is indivisible by $p$,
%then $(T^n u_o)_{\textit{\textbf i}}$ equals $0$.
%\end{thm}
%
%\begin{thm}[\cite{takahashi1992}]
%\label{thm:tkhs02}
%Let $p$ be a prime number and $m \in {\mathbb Z}_{>0}$.
%For a $p^m$-state linear CA, its limit set $\lim_{k \to \infty} (V_T(p^k)/p^k)$ exists.
%\end{thm}

Below, we define a singular function related to CAs.

\begin{dfn}[\cite{derham1957, yhk1997}]
\label{def:lb}
Let $\alpha$ be a parameter such that $0 < \alpha < 1$ and $\alpha \neq 1/2$.
Salem's singular function $L_{\alpha}:[0,1] \to [0,1]$ is defined by
\begin{align}
\label{eq:lb}
L_{\alpha} (x):=
\left\{
\begin{array}{l l}
\alpha L_{\alpha} (2 x) & \ (0 \leq x < 1/2),\\
(1-\alpha) L_{\alpha}(2 x-1) + \alpha & \ (1/2 \leq x \leq 1).
\end{array}
\right. 
\end{align}
\end{dfn}
The functional equation~\eqref{eq:lb} has a unique continuous solution on $[0,1]$.
The resulting function $L_{\alpha}$ is strictly increasing, continuous, and has a derivative of zero almost everywhere. 
Figure~\ref{fig:Lx} shows the graphs of $L_{\alpha}$ for $\alpha = 1/3$, $1/5$, and $1/9$. 

\begin{figure}[h]
\begin{minipage}[b]{0.32\linewidth}
\centering
\includegraphics[width=1.\linewidth]{./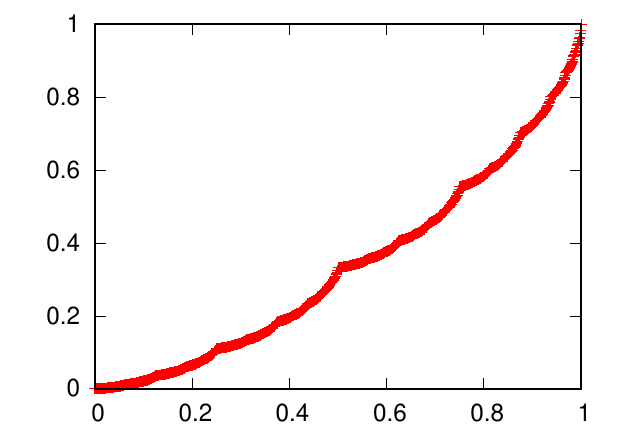}\\
(a) $\alpha = 1/3$
\end{minipage}
%\begin{minipage}[b]{0.05\linewidth}
%\quad
%\end{minipage}
\begin{minipage}[b]{0.32\linewidth}
\centering
\includegraphics[width=1.\linewidth]{./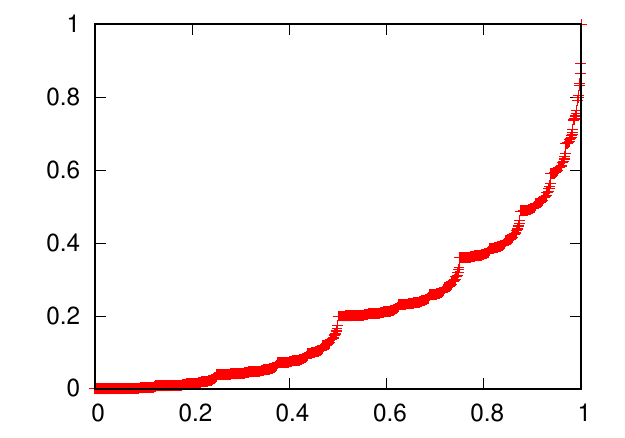}\\
(b)  $\alpha = 1/5$
\end{minipage}
%\begin{minipage}[b]{0.05\linewidth}
%\quad
%\end{minipage}
\begin{minipage}[b]{0.32\linewidth}
\centering
\includegraphics[width=1.\linewidth]{./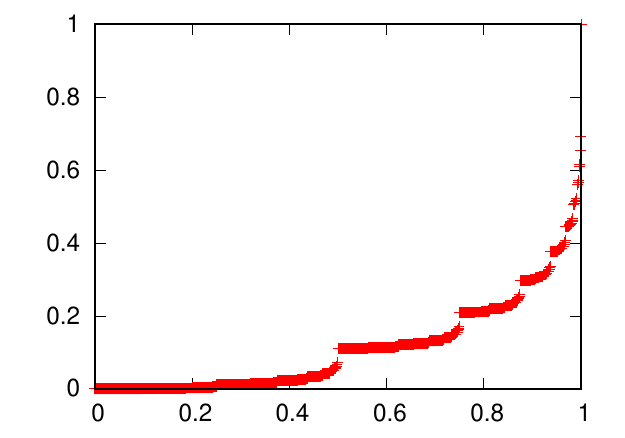}\\
(c)  $\alpha = 1/9$
\end{minipage}
\caption{$L_{\alpha}(x)$ for $x \in [0, 1]$}
\label{fig:Lx}
\end{figure}

\section{Main results}
\label{sec:main}

In this section, we prepare two $D$-dimensional linear symmetric $2$-state radius-$1$ CAs, $F_D$ and $G_D$, and show that for any dimension $D \geq 1$, they are CAs that yield Salem's singular function $L_{1/(2D+1)}$ and $L_{1/(2^D+1)}$, respectively.

% 対称CAから生成される二種の近傍の$D$次元版でSalem's singular functionが得られることを示す。

Here, we consider the following two $D$-dimensional linear symmetric $2$-state radius-$1$ CAs.

\begin{dfn}
\label{dfn:fdgd}
Let ${\textit{\textbf e}}_{(j)} = (0, \ldots, 0, \underbrace{1}_{\text{$j$th}}, 0, \ldots, 0) \in \{0, 1\}^D$ for $1 \leq j \leq D$, and $k = \sum_{j=1}^D k_j 2^j$ for $0 \leq k < 2^D$.
CAs, $(\{0, 1\}^{{\mathbb Z}^D}, F_D)$ and $(\{0, 1\}^{{\mathbb Z}^D}, G_D)$ are given by
\begin{align}
\label{eq:fd}
(F_D u)_{\textit{\textbf i}} 
&= \sum_{j=1}^D u_{{\textit{\textbf i}} + {\textit{\textbf e}}_{(j)}} + u_{{\textit{\textbf i}} - {\textit{\textbf e}}_{(j)}} \pmod 2, \\
\label{eq:gd}
(G_D u)_{\textit{\textbf i}} 
&= \sum_{k=0}^{2^D-1} u_{{\textit{\textbf i}} + \sum_{j=1}^D (-1)^{k_j}{\textit{\textbf e}}_{(j)}} \pmod 2,
\end{align}
for ${\textit{\textbf i}}=(i_1, \ldots, i_D) \in {\mathbb Z}^D$.
\end{dfn}

The local rule of $F_D$ depends on $2D$ states, and the local rule of $G_D$ depends on $2^D$ states.

\begin{exm}
We give the local rules of $F_D$ and $G_D$ for $D=1$, $2$, and $3$.
When $D=1$, for $u \in \{0, 1\}^{\mathbb Z}$,
\begin{align}
( F_1 u)_i &= ( G_1 u)_i =u_{i-1} + u_{i+1} \pmod 2.
\end{align}
$F_1$ and $G_1$ are Rule $90$ based on Wolfram code \cite{wolfram2002}, and the limit set of its spatio-temporal pattern is well known to be the Sierpinski gasket.
When $D=2$, for $u \in \{0, 1\}^{{\mathbb Z}^2}$,
\begin{align}
( F_2 u)_{i_1, i_2} &= u_{i_1-1, i_2} + u_{i_1, i_2-1} + u_{i_1, i_2+1} + u_{i_1+1, i_2} \pmod 2, \\
( G_2 u)_{i_1, i_2} &= u_{i_1-1, i_2-1} + u_{i_1-1, i_2+1} + u_{i_1+1, i_2-1} + u_{i_1+1, i_2+1} \pmod 2.
\end{align}
The spatio-temporal patterns of $F_2$ and $G_2$ from $u_o$ are shown in Figure~\ref{fig:f2g2}.
For $D=3$ and $u \in \{0, 1\}^{{\mathbb Z}^3}$, we have
\begin{align}
( F_3 u)_{i_1, i_2, i_3} &= u_{i_1+1, i_2, i_3} + u_{i_1-1, i_2, i_3} + u_{i_1, i_2+1, i_3} + u_{i_1, i_2-1, i_3} \nonumber \\
& \qquad  + u_{i_1, i_2, i_3+1} + u_{i_1, i_2, i_3-1} \pmod 2, \\
( G_3 u)_{i_1, i_2, i_3} &= u_{i_1+1, i_2+1, i_3+1} + u_{i_1-1, i_2+1, i_3+1} + u_{i_1+1, i_2-1, i_3+1} \nonumber\\
& \qquad + u_{i_1+1, i_2+1, i_3-1} + u_{i_1-1, i_2-1, i_3+1} + u_{i_1-1, i_2+1, i_3-1} \nonumber\\
& \qquad + u_{i_1+1, i_2-1, i_3-1} + u_{i_1-1, i_2-1, i_3-1} \pmod 2.
\end{align}
\end{exm}

\begin{figure}[h]
\centering
\includegraphics[width=1.\linewidth]{./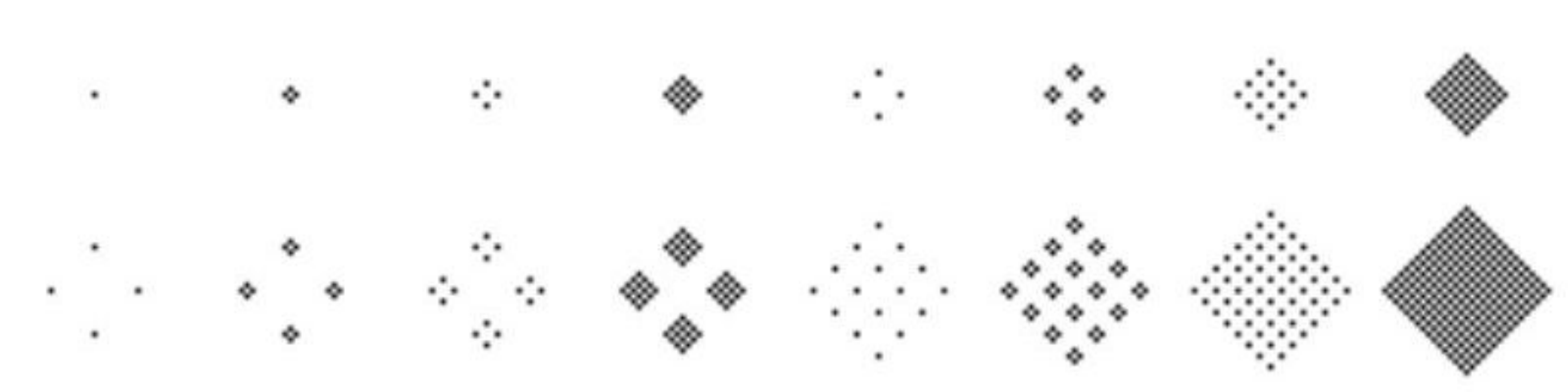}\\
(a) $\{F_2^n u_o\}_{n=0}^{2^4-1}$\\
\includegraphics[width=1.\linewidth]{./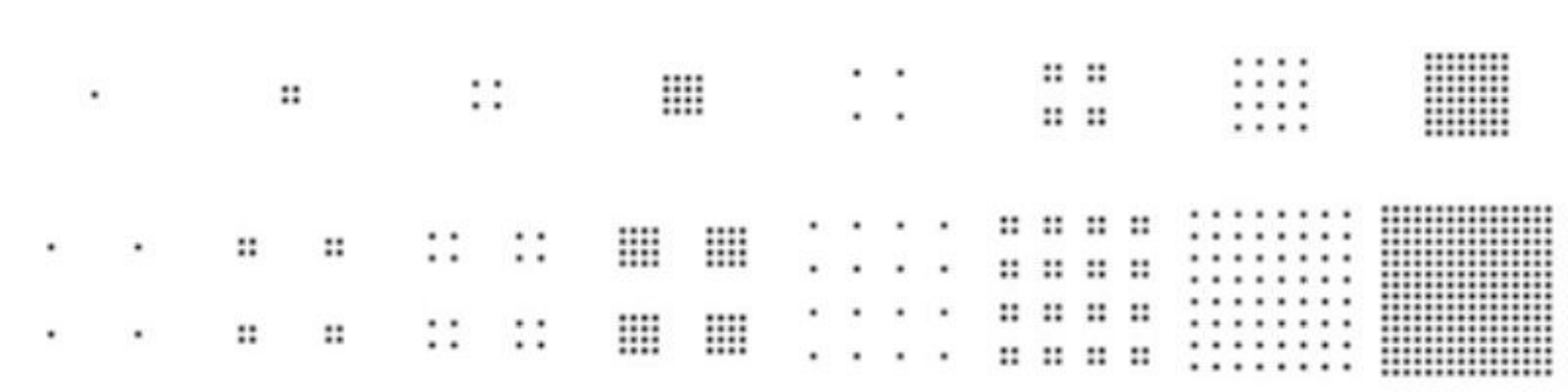}\\
(b)  $\{G_2^n u_o\}_{n=0}^{2^4-1}$
\caption{Spatio-temporal patterns of $F_2$ and $G_2$}
\label{fig:f2g2}
\end{figure}

The following theorem by Takahashi is known as a result for linear CAs.
In the original work \cite{takahashi1992}, the result is given for a $p^k$-state CA where $p$ is a prime number and $k \in {\mathbb Z}_{>0}$.
In the present work, it suffices to consider only the two-state case, which we present here.

\begin{thm}[\cite{takahashi1992}]
\label{thm:tkhs03}
For a linear $2$-state CA, $(T^{2n} u_o)_{2 {\textit{\textbf i}}} = (T^n u_o)_{\textit{\textbf i}}$ for $n \in {\mathbb Z}_{\geq 0}$ and ${\textit{\textbf i}} \in {\mathbb Z}^D$.
If time step $n$ is even and at least one of the elements of ${\textit{\textbf i}}=(i_1, i_1, \ldots, i_D)$ is odd,
then $(T^n u_o)_{\textit{\textbf i}}$ equals $0$.
\end{thm}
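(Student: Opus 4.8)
The plan is to recast the action of a linear two-state CA as multiplication of Laurent polynomials over $\mathbb{F}_2 = \mathbb{Z}/2\mathbb{Z}$, and then to read off everything from the Frobenius identity $P^2 = P(X_1^2,\ldots,X_D^2)$.

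First I would encode each finitely supported configuration $u$ — in particular every $T^n u_o$, which stays finitely supported because $T$ has finite range — by the Laurent polynomial $\widehat{u} = \sum_{\mathbf{i}\in\mathbb{Z}^D} u_{\mathbf{i}}\, X_1^{i_1}\cdots X_D^{i_D} \in \mathbb{F}_2[X_1^{\pm 1},\ldots,X_D^{\pm 1}]$. Writing the linear rule as $(Tu)_{\mathbf{i}} = \sum_j a_j u_{\mathbf{i}+v_j} \pmod 2$ and reindexing the finite sum, one obtains $\widehat{Tu} = A\cdot\widehat{u}$, where $A = \sum_j a_j X^{-v_j}$ is a fixed Laurent polynomial. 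Since $\widehat{u_o} = 1$, induction on $n$ gives $\widehat{T^n u_o} = A^n$; thus $(T^n u_o)_{\mathbf{i}}$ is precisely the coefficient of $X_1^{i_1}\cdots X_D^{i_D}$ in $A^n$.

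Next comes the only real ingredient: over $\mathbb{F}_2$ the squaring map is a ring endomorphism that sends each $X_k$ to $X_k^2$, so for any Laurent polynomial $P = \sum_{\mathbf{i}} c_{\mathbf{i}} X^{\mathbf{i}}$ we have $P^2 = \sum_{\mathbf{i}} c_{\mathbf{i}}^2 X^{2\mathbf{i}} = \sum_{\mathbf{i}} c_{\mathbf{i}} X^{2\mathbf{i}}$, using $c_{\mathbf{i}}^2 = c_{\mathbf{i}}$. Applying this with $P = A^n$ yields $A^{2n} = (A^n)^2 = \sum_{\mathbf{i}} (T^n u_o)_{\mathbf{i}}\, X^{2\mathbf{i}}$. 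Comparing coefficients proves both assertions at once: the monomial $X^{2\mathbf{i}}$ occurs in $A^{2n}$ with coefficient $(T^n u_o)_{\mathbf{i}}$, i.e.\ $(T^{2n} u_o)_{2\mathbf{i}} = (T^n u_o)_{\mathbf{i}}$; and every monomial $X^{\mathbf{j}}$ having at least one odd exponent $j_k$ has coefficient $0$ in $A^{2n}$. For the second statement, when the time step is even, say $n = 2m$, we have $(T^n u_o)_{\mathbf{j}} = [X^{\mathbf{j}}]\,A^{2m}$, which vanishes as soon as some component $j_k$ is odd.

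I do not expect a genuine obstacle: the whole content is the single identity $A^{2n} = (A^n)^2$. The only points needing care are (i) working with Laurent rather than ordinary polynomials, so that $\mathbb{Z}^D$ and negative shift vectors are handled uniformly, and (ii) verifying $\widehat{Tu} = A\widehat{u}$ directly from the local rule, which is just a change of summation index modulo $2$. An alternative route would be to expand $(T^n u_o)_{\mathbf{i}}$ as a sum of multinomial coefficients modulo $2$ and invoke Lucas' theorem, but the Frobenius argument is shorter and delivers both claims of the theorem simultaneously.
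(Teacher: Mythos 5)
The paper states this theorem as an imported result, citing Takahashi (1992), and gives no proof of its own, so there is no internal argument to compare against. Your proof is correct and complete, and it is essentially the classical argument behind the cited result: represent $T^n u_o$ as the Laurent polynomial $A^n$ over $\mathbb{F}_2$ and apply the Frobenius identity $(A^n)^2 = A^n(X_1^2,\ldots,X_D^2)$, which yields both claims at once.
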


\begin{rmk}
\label{rmk:selfsimi}
By Theorem~\ref{thm:tkhs03}, we obtain the self-similarities of $\{ F_D^n u_o\}_{n=0}^{2^k-1}$ and $\{ G_D^n u_o\}_{n=0}^{2^k-1}$.

For the initial configuration $u_o$, and time step $n=1$, we have
\begin{align}
(F_D u_o)_{{\textit{\textbf i}}} &= 
\left\{
\begin{array}{l l}
1 & \ ({\textit{\textbf i}} = \pm {\textit{\textbf e}}_{(j)} \mbox{ for $1 \leq j \leq D$}),\\
0 & \ (otherwise),
\end{array}
\right.\\
(G_D u_o)_{{\textit{\textbf i}}} &= 
\left\{
\begin{array}{l l}
1 & \ ({\textit{\textbf i}} = \sum_{j=1}^D (-1)^{k_j}{\textit{\textbf e}}_{(j)} \mbox{ for $1 \leq k < 2^D$}),\\
0 & \ (otherwise),
\end{array}
\right.
\end{align}
where ${\textit{\textbf e}}_{(j)} = (0, \ldots, 0, \underbrace{1}_{\text{$j$th}}, 0, \ldots, 0) \in \{0, 1\}^D$ for $1 \leq j \leq D$, and $k = \sum_{j=1}^D k_j 2^j$ for $0 \leq k < 2^D$.

By Theorem~\ref{thm:tkhs03}, we obtain the spatio-temporal patterns, $\{F_D^n u_o\}_{n=0}^{2^{k+1}-1}$ and $\{G_D^n u_o\}_{n=0}^{2^{k+1}-1}$ from the patterns, $\{F_D^n u_o\}_{n=0}^{2^k-1}$ and $\{G_D^n u_o\}_{n=0}^{2^k-1}$, respectively.
First, we obtain all states $(F_D^n u_o)_{{\textit{\textbf i}}}$ and $(G_D^n u_o)_{{\textit{\textbf i}}}$ if time step $n$ is even and $n \leq 2^{k+1}-2$. 
Next, we consider the distance between $1$-states for an even time step.
There are no adjacent $1$-states for any time step $n$.
Then, we set their coordinates, ${\textit{\textbf i}}$ and $\hat{\textit{\textbf i}} = ( \hat{i}_1, \hat{i}_2, \ldots, \hat{i}_D) \in {\mathbb Z}^D$ such that $| i_j - \hat{i}_j| \geq 2$ for $1 \leq j \leq D$. 
For a time step $2n$, we have $2 | i_j - \hat{i}_j | \geq  4$, and $1$-states for an even time step are at least four cells apart from each other. 
Each $1$-state multiplies into $2D$ $1$-states in $F_D$ for time step $2n+1$, and multiplies into $2^D$ $1$-states in $G_D$.
Therefore, the spatio-temporal patterns of $F_D$ and $G_D$ are self-similar for time step $n=2^k-1$ for $k \in {\mathbb Z}_{\geq 0}$.
\end{rmk}

\begin{exm}
When $D=1$, the result of Remark~\ref{rmk:selfsimi} is as shown in Figure~\ref{fig:takahashi1}.
All states until time step $n=2^k-1$ are given. 
By Theorem~\ref{thm:tkhs03}, the states with even time step and odd sites are $0$ (Figure~\ref{fig:takahashi1}~(a)), and we have all states for even time steps until $n=2^{k+1}-1$ (Figure~\ref{fig:takahashi1}~(b)).
Because all $1$-states in an even time step are at least four cells apart, the $1$-states multiples into two at the next odd time step (Figure~\ref{fig:takahashi1}~(c)).
Therefore, the spatio-temporal pattern is self-similar for each $2^k-1$ with $k \in {\mathbb Z}_{\geq 0}$.

\begin{figure}[h]
\centering
\includegraphics[width=1.\linewidth]{./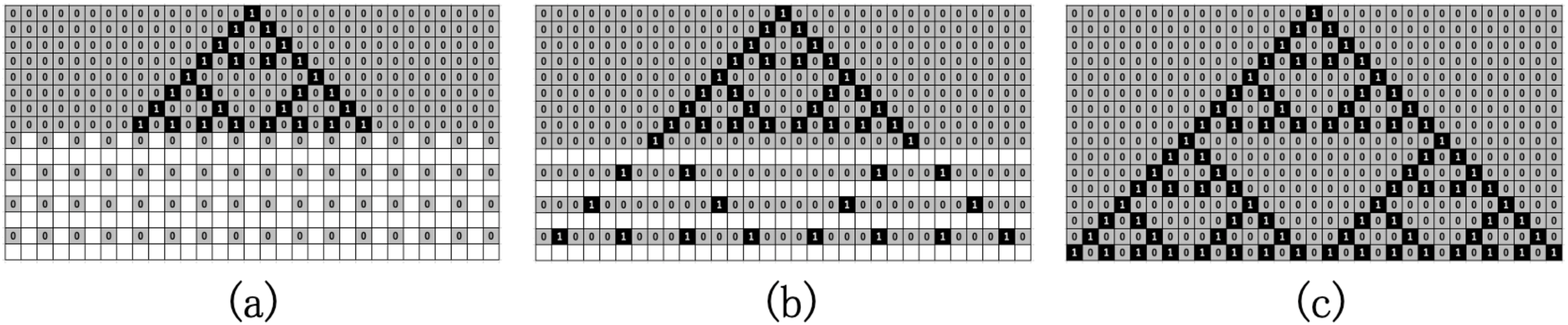}\\
\caption{Constructing the spatio-temporal pattern of $F_1$ (or $G_1$) based on Theorem~\ref{thm:tkhs03}}
\label{fig:takahashi1}
\end{figure}
\end{exm}

\begin{prop}
\label{prop:numcumFG}
% $cum_{S90}(2^k-1) = 3^k$ and $num_{S90}(n) = 2^{\sum_{i=1}^k x_i}$.\\
% $cum_{T0}(2^k-1) = 5^k$ and $num_{T0}(n) = 4^{\sum_{i=1}^k x_i}$.\\
For time step $n=2^k-1$, we have 
$cum_{F_D} (2^k-1) = (2D+1)^k$, and $cum_{G_D} (2^k-1) = (2^D+1)^k$.
For time step $n = \sum_{i=0}^{k-1} x_{k-i} 2^i$, we have 
$num_{F_D} (n) = (2D)^{\sum_{i=1}^k x_i}$, and $num_{G_D} (n) = (2^D)^{\sum_{i=1}^k x_i}$.
\end{prop}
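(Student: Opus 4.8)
The plan is to isolate two simple recursions for the site-counts and then let everything follow by an induction on $n$ together with the binomial theorem. Throughout let $T\in\{F_D,G_D\}$ and set $c:=2D$ if $T=F_D$ and $c:=2^D$ if $T=G_D$, so that $c$ is exactly the number of neighbour offsets of the local rule of $T$ (these offsets are $\pm{\textit{\textbf e}}_{(j)}$, $1\le j\le D$, when $T=F_D$, and all $2^D$ sign vectors in $\{-1,1\}^D$ when $T=G_D$); in either case a single $1$-state at a point ${\textit{\textbf w}}$ is carried by $T$ to a $1$ at each of the $c$ distinct cells ${\textit{\textbf w}}+{\textit{\textbf v}}$, ${\textit{\textbf v}}$ running over these offsets. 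The target recursions are
\begin{align*}
num_T(2m)=num_T(m),\qquad num_T(2m+1)=c\cdot num_T(m)\qquad(m\in{\mathbb Z}_{\geq 0}).
\end{align*}
Granting these, since $num_T(0)=1$, an induction on $n$ that splits on the parity of $n$ gives $num_T(n)=c^{s(n)}$, where $s(n):=\sum_{i=1}^{k}x_i$ is the binary digit sum of $n=\sum_{i=0}^{k-1}x_{k-i}2^i$ (one uses $s(2m)=s(m)$ and $s(2m+1)=s(m)+1$); this is the $num$ part of the proposition. For the $cum$ part, $2^k-1$ is the all-ones binary word of length $k$, so $\{0,1,\dots,2^k-1\}$ corresponds bijectively to $\{0,1\}^k$ with $s$ becoming the Hamming weight, whence
\begin{align*}
cum_T(2^k-1)&=\sum_{n=0}^{2^k-1}num_T(n)=\sum_{b\in\{0,1\}^k}c^{|b|}\\
&=\sum_{j=0}^{k}\binom{k}{j}c^{j}=(1+c)^k,
\end{align*}
which equals $(2D+1)^k$ for $T=F_D$ and $(2^D+1)^k$ for $T=G_D$.

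It remains to prove the two recursions. The even one is immediate from Theorem~\ref{thm:tkhs03}: the identity $(T^{2m}u_o)_{2{\textit{\textbf i}}}=(T^m u_o)_{\textit{\textbf i}}$ together with the vanishing of $(T^{2m}u_o)_{\textit{\textbf i}}$ whenever some coordinate of ${\textit{\textbf i}}$ is odd shows that ${\textit{\textbf i}}\mapsto 2{\textit{\textbf i}}$ is a bijection from the $1$-states of $T^m u_o$ onto those of $T^{2m}u_o$. For the odd one, write $T^{2m+1}u_o=T(T^{2m}u_o)$; by the same theorem every $1$-state of $T^{2m}u_o$ sits at a point ${\textit{\textbf w}}$ with all coordinates even, and, $T$ being linear, $T^{2m+1}u_o$ is the mod-$2$ superposition of $num_T(2m)$ \emph{blocks}, one per such ${\textit{\textbf w}}$, each block being the $c$ distinct cells ${\textit{\textbf w}}+{\textit{\textbf v}}$. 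Hence $num_T(2m+1)=c\cdot num_T(2m)=c\cdot num_T(m)$ provided the blocks attached to distinct source points are pairwise disjoint, so that no mod-$2$ cancellation occurs. This disjointness is the only genuine obstacle in the argument.

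To settle the disjointness, I would reduce it to a separation property of the $1$-states of $T^m u_o$. A short coordinatewise check then shows that the blocks of two distinct even-coordinate source points ${\textit{\textbf w}}$ and ${\textit{\textbf w}}'$ can overlap only when $({\textit{\textbf w}}-{\textit{\textbf w}}')/2$ is a unit vector $\pm{\textit{\textbf e}}_{(j)}$ (in the case $T=F_D$) or lies in $\{-1,0,1\}^D\setminus\{{\bf 0}\}$ (in the case $T=G_D$); equivalently, a collision would force $T^m u_o$ to contain two grid-adjacent $1$-states (for $F_D$) or two $1$-states at $\|\cdot\|_\infty$-distance $1$ (for $G_D$). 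Both possibilities are ruled out by the observation in Remark~\ref{rmk:selfsimi} that no two $1$-states are ever adjacent, which I would simply invoke. Alternatively, and perhaps more transparently, the required spacing follows from a parity invariant of each rule: $F_D$ reverses the parity of the coordinate sum $\sum_j i_j$ at every step, so all $1$-states of $F_D^n u_o$ share that parity and hence are at pairwise even $\|\cdot\|_1$-distance; whereas $G_D$ reverses the parity of every coordinate simultaneously, so any two $1$-states of $G_D^n u_o$ agree coordinatewise modulo $2$ and hence are at pairwise even $\|\cdot\|_\infty$-distance. Either way the separation holds for every $n$, the odd recursion follows, and the proposition is established.
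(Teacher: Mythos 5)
Your proof is correct, and it diverges from the paper's in two places worth noting. For the $num$ part you use exactly the paper's recursions $num_T(2m)=num_T(m)$ and $num_T(2m+1)=c\cdot num_T(m)$, but where the paper simply asserts the odd-step doubling ``by Theorem~\ref{thm:tkhs03}'' (leaning implicitly on the adjacency discussion in Remark~\ref{rmk:selfsimi} and on Figure~\ref{fig:fdgd_selfsimilar}), you actually close the one real gap --- the absence of mod-$2$ cancellation between the $c$-cell blocks spawned by distinct $1$-states --- and your parity invariants (coordinate-sum parity alternates under $F_D$; all coordinates flip parity simultaneously under $G_D$) give a cleaner and fully rigorous separation argument than the paper's ``$1$-states at even times are at least four cells apart'' claim, which as stated in Remark~\ref{rmk:selfsimi} is not literally true of all coordinate differences. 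For the $cum$ part the routes genuinely differ: the paper reads $cum_T(2^{k+1}-1)=(c+1)\,cum_T(2^k-1)$ off the self-similar tiling of $\{T^nu_o\}_{n=0}^{2^{k+1}-1}$ by $c+1$ copies of $\{T^nu_o\}_{n=0}^{2^k-1}$ and iterates, whereas you sum the already-established formula $num_T(n)=c^{s(n)}$ over $n<2^k$ and invoke the binomial theorem to get $(1+c)^k$. The paper's version makes the geometric self-similarity (used again in Lemma~\ref{lem:fin}) visible; yours is more self-contained, needing only the digit-sum formula and no appeal to the figure. Both are valid derivations of the same identities.
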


\begin{proof}
By Theorem~\ref{thm:tkhs03}, the spatio-temporal patterns of $F_D$ and $G_D$ are self-similar.
The spatio-temporal pattern $\{ F_D^n u_o \}_{n=0}^{2^{k+1}-1}$ consists of $2D+1$ $\{ F_D^n u_o \}_{n=0}^{2^k-1}$, and 
$\{ G_D^n u_o \}_{n=0}^{2^{k+1}-1}$ consists of $2^D+1$ $\{ G_D^n u_o \}_{n=0}^{2^k-1}$, because $num_{F_D}(2^k) = 2D$ and $num_{G_D}(2^k) = 2^D$ (see Figure~\ref{fig:fdgd_selfsimilar}).
Thus, we have $cum_{F_D} (2^k-1) = (2D+1)^k$, and $cum_{G_D} (2^k-1) = (2^D+1)^k$.

Next, we calculate $num_{F_D}(n)$ and $num_{G_D}(n)$.
Because the initial configuration is the single site seed, $num_{F_D}(0) = num_{G_D} = 1$.
By Theorem~\ref{thm:tkhs03}, we have $num_{F_D}(2n) = num_{F_D}(n)$, and $num_{G_D}(2n) = num_{G_D}(n)$, and for odd time steps, $num_{F_D}(2n+1) = 2D \, num_{F_D}(2n)$, and $num_{G_D}(2n+1) = 2^D num_{G_D}(2n)$.
Therefore, $num_{F_D} (n) = (2D)^{\sum_{i=1}^k x_i}$, and $num_{G_D} (n) = (2^D)^{\sum_{i=1}^k x_i}$.
\end{proof}

\begin{figure}[h]
\begin{minipage}[b]{0.47\linewidth}
\centering
\includegraphics[width=1.\linewidth]{./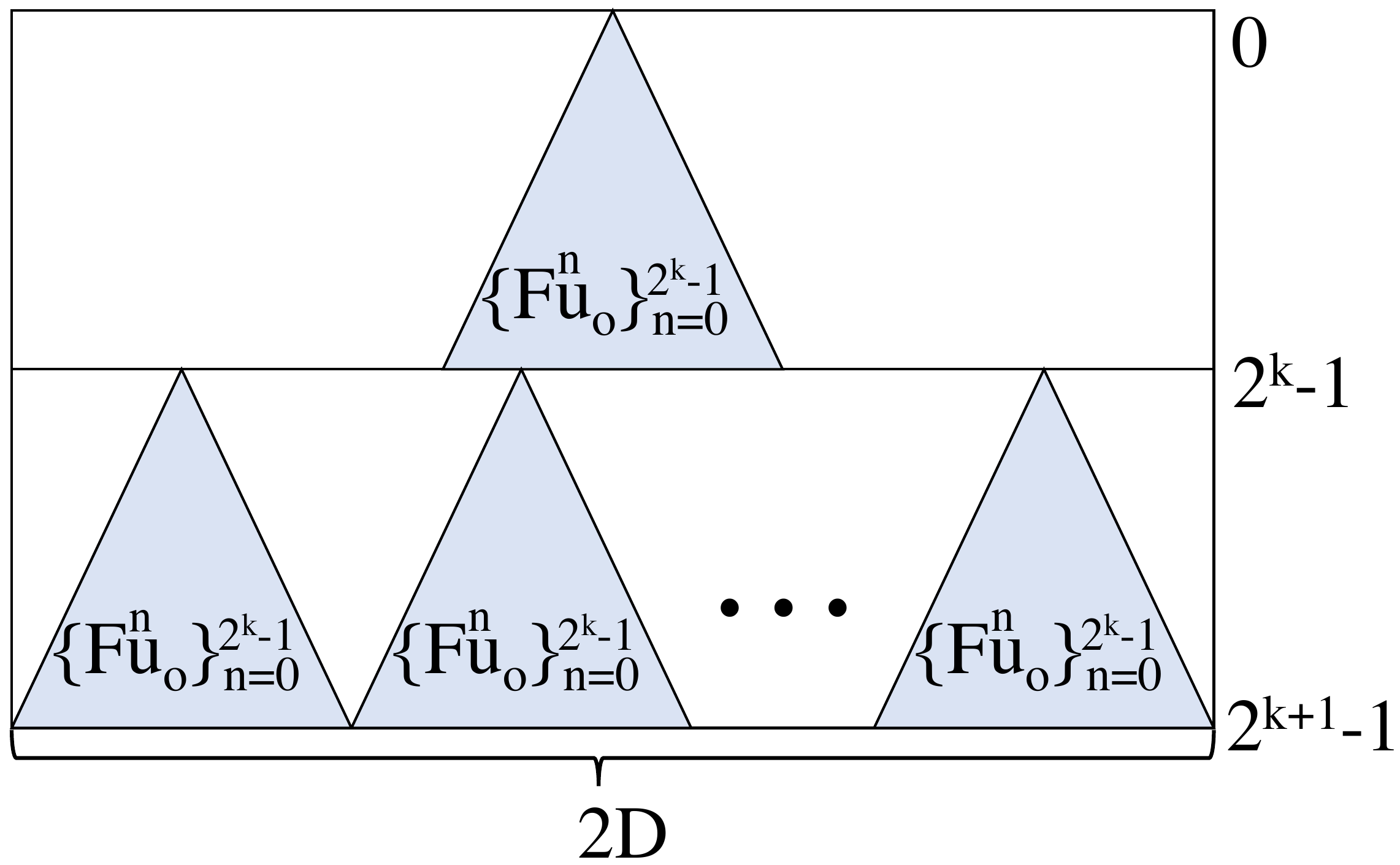}\\
(a) Self-similar sets for $F_D$
\end{minipage}
\begin{minipage}[b]{0.05\linewidth}
\quad
\end{minipage}
\begin{minipage}[b]{0.47\linewidth}
\centering
\includegraphics[width=1.\linewidth]{./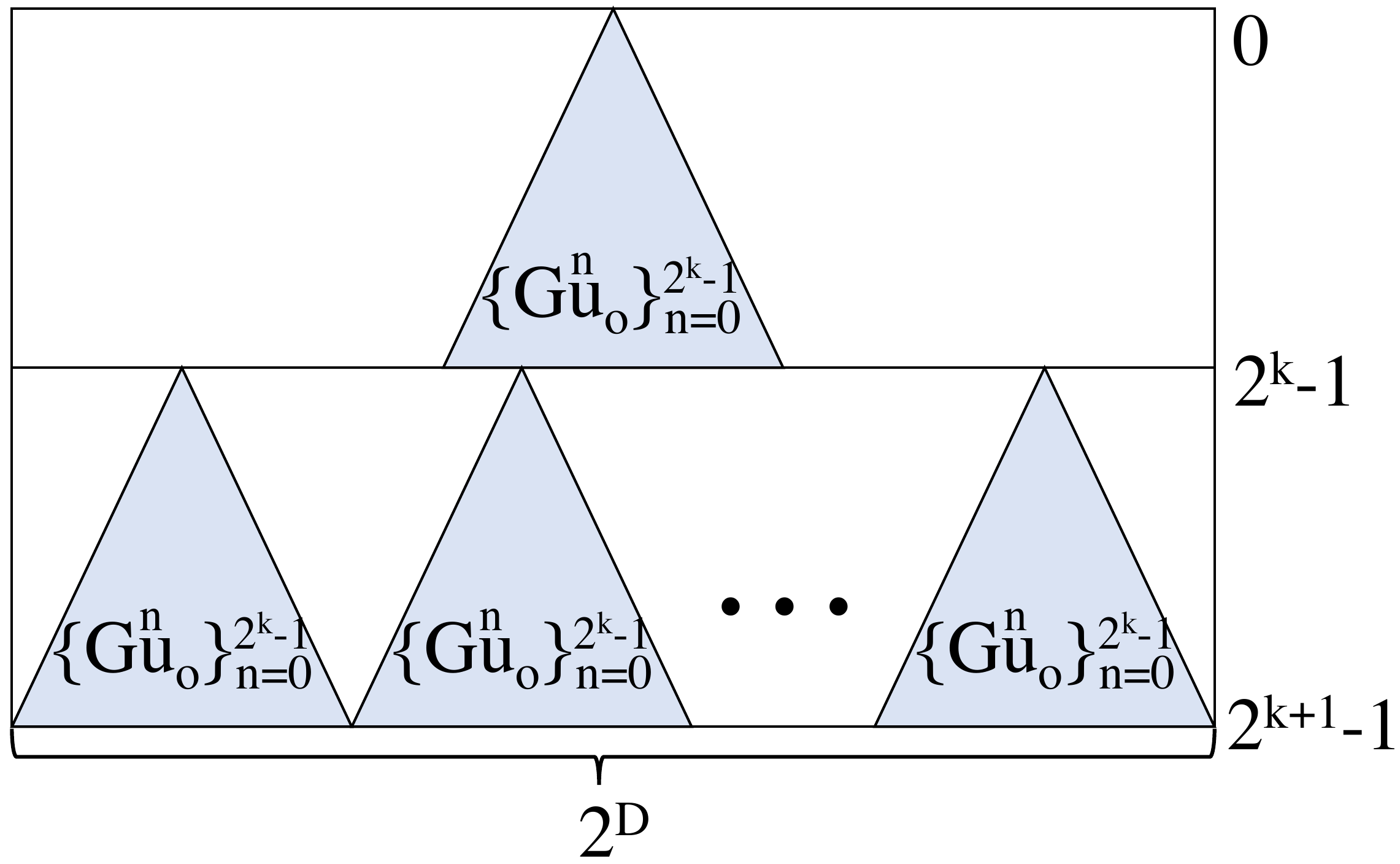}\\
(b) Self-similar sets for $G_D$
\end{minipage}
\caption{Constructing self-similar sets until time step $2^{k+1}-1$ for $F_D$ and $G_D$}
\label{fig:fdgd_selfsimilar}
\end{figure}

\begin{lem}
\label{lem:fin}
Let $n = \sum_{i=0}^{k-1} x_{k-i} \, 2^i \geq 0$, where $x_0 = 0$. 
If a CA $(\{0, 1\}^{{\mathbb Z}^D}, T)$ is $F_D$ or $G_D$, then we have
\begin{align}
\label{eq:fin}
cum_T (n-1) 
&= \sum_{i=1}^k x_i \ num_T \left( \sum_{j=0}^{i-1} x_j 2^{k-j} \right) cum_T (2^{k-i}-1).
\end{align}
\end{lem}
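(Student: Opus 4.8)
The plan is to read off $cum_T(n-1) = \sum_{m=0}^{n-1} num_T(m)$ from the binary expansion of $n$, by cutting the set of time steps $\{0,1,\dots,n-1\}$ into dyadic blocks and evaluating the partial sum of $num_T$ over each block with the help of Proposition~\ref{prop:numcumFG}.

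First I would establish the block decomposition. Writing $n = \sum_{i=1}^{k} x_i 2^{k-i}$ with $x_0 = 0$, put $a_i := \sum_{j=0}^{i-1} x_j 2^{k-j}$, i.e.\ the number obtained from $n$ by zeroing its last $k-i+1$ binary digits. Classifying each $m$ with $0 \le m < n$ by the most significant binary position at which it differs from $n$ (there $n$ must carry a $1$ and $m$ a $0$), one gets the disjoint decomposition
\begin{align}
\{0,1,\dots,n-1\} = \bigsqcup_{\substack{1 \le i \le k\\ x_i = 1}} \bigl\{\, a_i,\ a_i+1,\ \dots,\ a_i + 2^{k-i}-1 \,\bigr\},
\end{align}
whose blocks have total size $\sum_{i} x_i 2^{k-i} = n$, confirming it is a partition. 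Hence $cum_T(n-1) = \sum_{i=1}^{k} x_i \sum_{r=0}^{2^{k-i}-1} num_T(a_i+r)$.

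Next I would evaluate each inner sum. By Proposition~\ref{prop:numcumFG}, $num_T(m)$ is $c$ raised to the binary digit sum of $m$, where $c = 2D$ for $F_D$ and $c = 2^D$ for $G_D$; in particular $num_T$ turns carry-free addition into multiplication. Since $a_i$ is a multiple of $2^{k-i+1}$ whereas $0 \le r < 2^{k-i}$, the binary supports of $a_i$ and $r$ are disjoint, so $num_T(a_i+r) = num_T(a_i)\,num_T(r)$, and therefore
\begin{align}
\sum_{r=0}^{2^{k-i}-1} num_T(a_i+r) = num_T(a_i)\sum_{r=0}^{2^{k-i}-1} num_T(r) = num_T(a_i)\, cum_T(2^{k-i}-1).
\end{align}
Summing over $i$ yields exactly~\eqref{eq:fin}; the terms with $x_i = 0$ drop out, and for $i=1$ one has $a_1 = 0$ with $num_T(0) = 1$, matching the padding convention $x_0 = 0$.

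I do not expect a genuine obstacle: the lemma is a reorganization of $\sum num_T$ once the multiplicative formula of Proposition~\ref{prop:numcumFG} is in hand. The only care needed is bookkeeping — matching the paper's indexing (with $x_1$ the most significant digit and $x_0 = 0$ a padding digit) to the dyadic blocks, and verifying that the supports of $a_i$ and $r$ are disjoint so that multiplicativity of $num_T$ applies. As an alternative one can prove~\eqref{eq:fin} by induction on $k$: if $x_1 = 0$ discard the leading digit and recurse; if $x_1 = 1$ peel off the top self-similar copy of the pattern (contributing $cum_T(2^{k-1}-1)$, the $i=1$ term) and apply the inductive hypothesis to the remaining $num_T(2^{k-1})$ copies sitting below time $2^{k-1}$, using $num_T(2^{k-1}+m) = num_T(2^{k-1})\,num_T(m)$ for $m < 2^{k-1}$ — but the direct block argument is shorter.
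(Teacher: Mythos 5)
Your proof is correct and complete, and it reaches \eqref{eq:fin} by a route that differs from the paper's in how each dyadic block is evaluated. The paper's own proof is geometric: invoking the self-similarity from Theorem~\ref{thm:tkhs03} and Remark~\ref{rmk:selfsimi}, it observes that each $1$-state at time step $a_i=\sum_{j=0}^{i-1}x_j2^{k-j}$ is the starting point of a copy of the pattern $\{T^m u_o\}_{m=0}^{2^{k-i}-1}$, so the block of time steps from $a_i$ to $a_i+2^{k-i}-1$ contributes $num_T(a_i)\,cum_T(2^{k-i}-1)$, and summing over the $1$-bits of $n$ gives the identity. You use the same partition of $\{0,\dots,n-1\}$ (by the most significant bit at which $m$ differs from $n$; your checks that the block sizes sum to $n$ and that the binary supports of $a_i$ and $r$ are disjoint are exactly the bookkeeping needed), but you evaluate each block arithmetically from the closed form $num_T(m)=c^{s(m)}$ of Proposition~\ref{prop:numcumFG}, where $s(m)$ is the binary digit sum and $c=2D$ or $2^D$, via the carry-free multiplicativity $num_T(a_i+r)=num_T(a_i)\,num_T(r)$. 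What your version buys is independence from the geometric picture: it requires no claim that the self-similar copies are disjoint and non-interacting (which the paper only sketches in Remark~\ref{rmk:selfsimi} via the ``four cells apart'' observation), only the already-stated formula for $num_T$; in that sense it is more self-contained and rigorous than the paper's three-sentence argument. What the paper's version buys is an explanation at the level of the spatio-temporal pattern itself, which is the picture the surrounding example (Figure~\ref{fig:r90_21}) illustrates. Your alternative induction on $k$ is also sound and is essentially the recursive form of the same block decomposition.
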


\begin{proof}
% 過去の論文でRule $150$について論じたのと同じように\cite{kawa2022}、
% A CA $T$ represents $F_D$ or $G_D$ for $D \in {\mathbb Z}_{>0}$.
By self-similarity of the spatio-temporal patterns, $num_T(2^0) = num_T(2^1)= \cdots = num_T(2^k)$ for $k \in {\mathbb Z}_{\geq}$, and $1$-states at time step $2^k$ are the starting points of the self-similar sets from time step $2^k$ to $2^{k+1}-1$.
Furthermore, $1$-states at time step $\sum_{j=0}^{i-1} x_j 2^{k-j}$ are the starting points of self-similar sets same as the pattern $\{T^n u_o\}_{n=0}^{2^{k-i}-1}$.
Thus, we obtain Equation~\eqref{eq:fin}.
\end{proof}

\begin{exm}
For time step $n=22-1$, we count the nonzero states in the spatio-temporal pattern $\{F_1^n u_o\}_{n=0}^{22-1}$ (Figure~\ref{fig:r90_21}~(a) ).
Figure~\ref{fig:r90_21}~(b) shows the self-similar sets for $\{F_1^n u_o\}_{n=0}^{22-1}$, which comprises one $\{F_1^n u_o\}_{n=0}^{2^4-1}$, two $\{F_1^n u_o\}_{n=0}^{2^2-1}$, and four $\{F_1^n u_o\}_{n=0}^{2^1-1}$.
Based on Equation~\eqref{eq:fin}, we can calculate
\begin{align}
cum_{S90}(22-1) &= num_{S90}(0) \, cum_{S90}(2^4-1) + num_{S90}(2^4) \, cum_{S90} (2^2 -1) \nonumber \\
& \quad + num_{S90}(2^4+2^2) \, cum_{S90} (2^1 -1)\\
&= 1 \cdot 81 + 2 \cdot 9 + 4 \cdot 3 =111.
\end{align}

\begin{figure}[h]
\begin{minipage}[b]{0.45\linewidth}
\centering
\includegraphics[height=25mm]{./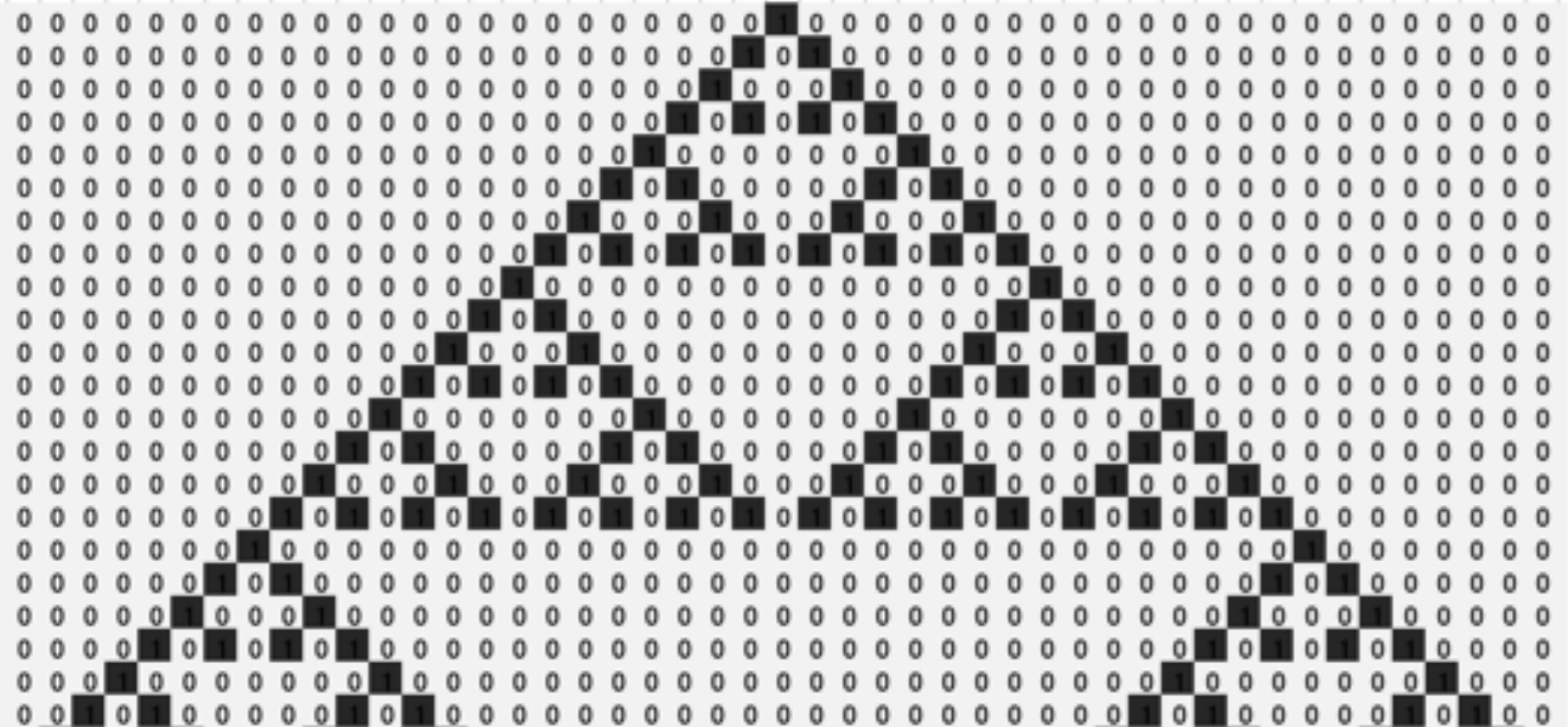}\\
(a) Spatio-temporal patterns $\{F_1^n u_o\}_{n=0}^{22-1}$ ($\{G_1^n u_o\}_{n=0}^{22-1}$)
\end{minipage}
\begin{minipage}[b]{0.05\linewidth}
\quad
\end{minipage}
\begin{minipage}[b]{0.49\linewidth}
\centering
\includegraphics[height=25mm]{./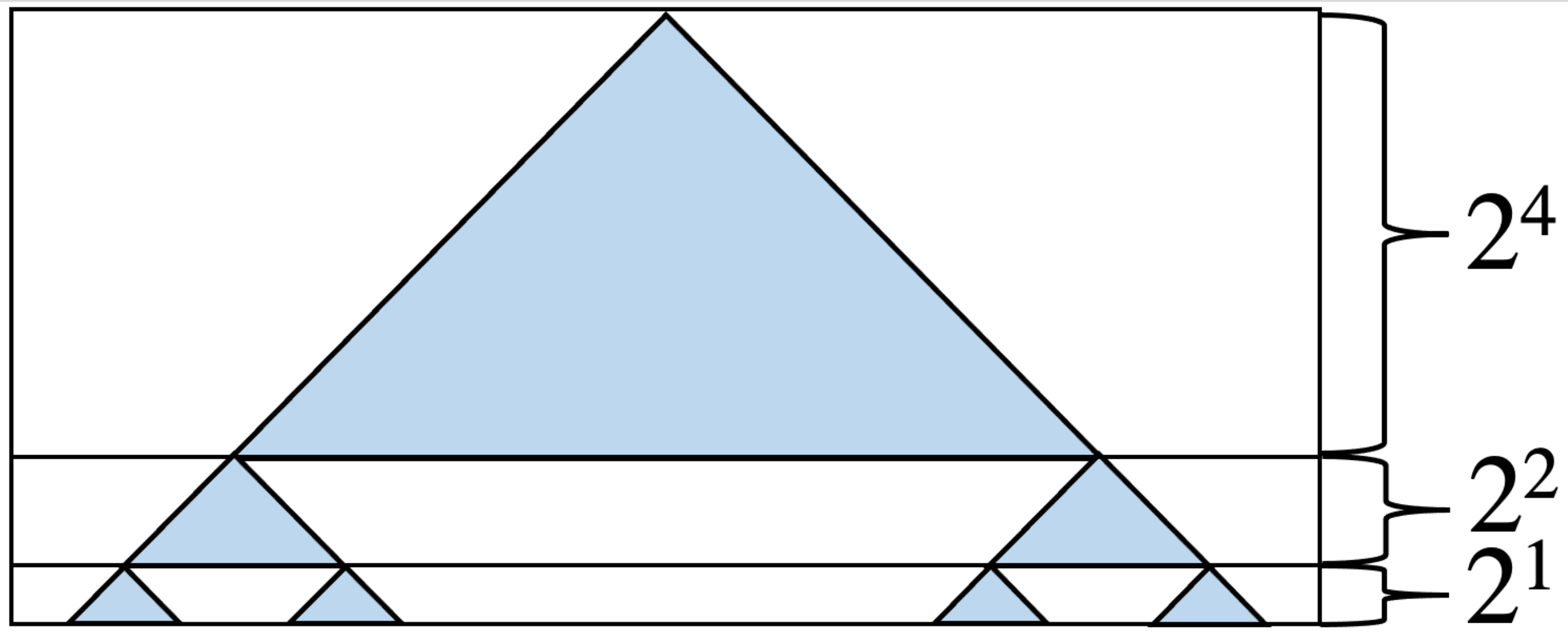}\\
(b) Self-similar sets for $\{F_1^n u_o\}_{n=0}^{22-1}$ ($\{G_1^n u_o\}_{n=0}^{22-1}$)
\end{minipage}
\caption{Counting $cum_{F_1}(n)$ ($cum_{G_1(n)}$) for $n=22-1$}
\label{fig:r90_21}
\end{figure}
\end{exm}

\begin{dfn}
\label{dfn:lim}
For a CA $(\{0, 1\}^{{\mathbb Z}^D}, T)$, we define a function $f_T: [0, 1] \to [0, 1]$ by
\begin{align}
\label{eq:org}
f_T (x) &:= \lim_{k \to \infty} \frac{cum_T\left((\sum_{i=1}^k x_i 2^{k-i})-1 \right)}{cum_T(2^k-1)}
\end{align}
for $x = \sum_{i=1}^{\infty} (x_i/2^i) \in [0,1]$.
\end{dfn}

\begin{thm}
\label{thm:main}
For the CAs, $F_D$ and $G_D$, the functions, $f_{F_D}$ and $f_{G_D}$, exist.
The function $f_{F_D}$ equals Salem's singular function $L_{1/(2D+1)}$, 
and the function $f_{G_D}$ equals Salem's singular function $L_{1/(2^D+1)}$.
\end{thm}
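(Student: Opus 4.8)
The plan is to evaluate the ratio in the definition~\eqref{eq:org} of $f_T$ in closed form and to recognize it as Salem's function applied to a finite truncation of $x$; this gives existence of the limit and the two claimed identities at once. Fix $x=\sum_{i=1}^{\infty}x_i2^{-i}\in[0,1]$ and, for $k\in{\mathbb Z}_{\geq 0}$, set $n_k=\sum_{i=1}^{k}x_i2^{k-i}$ and $x^{(k)}=\sum_{i=1}^{k}x_i2^{-i}$, so that $n_k=2^kx^{(k)}$ and $x^{(k)}\to x$ as $k\to\infty$.

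First I would treat $T=F_D$ with $\alpha=1/(2D+1)$. Substituting the formulas of Proposition~\ref{prop:numcumFG} into Lemma~\ref{lem:fin} — namely $num_{F_D}\!\big(\sum_{j=0}^{i-1}x_j2^{k-j}\big)=(2D)^{\sum_{j=1}^{i-1}x_j}$ (using $x_0=0$), $cum_{F_D}(2^{k-i}-1)=(2D+1)^{k-i}$, and $cum_{F_D}(2^k-1)=(2D+1)^k$ — yields
\begin{align}
\label{eq:closedform}
\frac{cum_{F_D}(n_k-1)}{cum_{F_D}(2^k-1)}=\sum_{i=1}^{k}x_i\,(2D)^{\sum_{j=1}^{i-1}x_j}\,(2D+1)^{-i}.
\end{align}
Since $1-\alpha=2D/(2D+1)$, so $2D=(1-\alpha)/\alpha$, and since $i=1+\#\{j<i:x_j=1\}+\#\{j<i:x_j=0\}$, the $i$-th summand of~\eqref{eq:closedform} equals $x_i\,\alpha^{\,1+\#\{j<i:\,x_j=0\}}(1-\alpha)^{\#\{j<i:\,x_j=1\}}$.

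Next I would match this with $L_\alpha(x^{(k)})$. Iterating the two-branch recursion~\eqref{eq:lb} of Definition~\ref{def:lb} along the binary digits $y_1,y_2,\dots$ of a point $y$ — multiplying a running weight by $\alpha$ at each digit $0$, and at each digit $1$ adding $\alpha$ times that weight to a running total and then multiplying the weight by $1-\alpha$ — produces, after letting the geometrically small remainder term vanish, the series representation
\begin{align}
\label{eq:seriesLB}
L_\alpha(y)=\sum_{i:\,y_i=1}\alpha^{\,1+\#\{j<i:\,y_j=0\}}(1-\alpha)^{\#\{j<i:\,y_j=1\}}.
\end{align}
Evaluating~\eqref{eq:seriesLB} at $y=x^{(k)}$, whose binary digits are $x_1,\dots,x_k,0,0,\dots$, collapses the sum to exactly the expression just obtained for $cum_{F_D}(n_k-1)/cum_{F_D}(2^k-1)$; hence that ratio equals $L_\alpha(x^{(k)})$ for every $k$. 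As $x^{(k)}\to x$ and $L_\alpha$ is continuous on $[0,1]$, the limit in~\eqref{eq:org} exists and equals $L_\alpha(x)=L_{1/(2D+1)}(x)$ (continuity of $L_\alpha$ also makes the value independent of the binary expansion chosen for a dyadic $x$). The case $T=G_D$ is identical with $2D$ replaced by $2^D$ throughout and $\alpha=1/(2^D+1)$.

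The only point demanding care is the exponent bookkeeping in passing from~\eqref{eq:closedform} to~\eqref{eq:seriesLB}, i.e.\ verifying $1+\#\{j<i:x_j=0\}+\#\{j<i:x_j=1\}=i$ and $\#\{j<i:x_j=1\}=\sum_{j=1}^{i-1}x_j$, which is routine. As a self-contained alternative bypassing~\eqref{eq:seriesLB}, one can read off from the self-similar decomposition of Remark~\ref{rmk:selfsimi} (cf.\ Figure~\ref{fig:fdgd_selfsimilar}) the recurrence $cum_{F_D}(n-1)=cum_{F_D}(2^{k-1}-1)+2D\cdot cum_{F_D}((n-2^{k-1})-1)$ valid for $2^{k-1}\leq n\leq 2^k$, divide by $cum_{F_D}(2^k-1)=(2D+1)\,cum_{F_D}(2^{k-1}-1)$, distinguish the cases $x_1=0$ and $x_1=1$, and let $k\to\infty$ to see that $f_{F_D}$ solves the functional equation~\eqref{eq:lb}; since $f_{F_D}$ is monotone (because $cum_{F_D}$ is non-decreasing) and agrees with the continuous function $L_\alpha$ on the dense set of dyadic rationals, it must coincide with $L_{1/(2D+1)}$, and likewise $f_{G_D}=L_{1/(2^D+1)}$.
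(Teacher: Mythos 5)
Your proposal is correct and follows essentially the same route as the paper: both arguments combine Proposition~\ref{prop:numcumFG} with Lemma~\ref{lem:fin} to arrive at the same closed-form series $\sum_{i} x_i\,(2D)^{\sum_{j=1}^{i-1}x_j}(2D+1)^{-i}$ (and its analogue with $2^D$). The only divergence is the final identification step: the paper establishes existence via d'Alembert's ratio test and then verifies the two-branch functional equation~\eqref{eq:lb} directly on the infinite series, invoking uniqueness of its continuous solution, whereas you recognize each partial sum as $L_{\alpha}$ evaluated at the dyadic truncation $x^{(k)}$ and conclude by continuity of $L_{\alpha}$. Your variant is marginally tighter on one point: the paper's appeal to uniqueness presupposes continuity of $f_{F_D}$, which it does not check separately, while your argument obtains existence of the limit and the identity $f_{F_D}=L_{1/(2D+1)}$ simultaneously from the continuity of $L_{\alpha}$ itself, and it also handles the ambiguity of binary expansions at dyadic points explicitly. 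Both routes are sound; the exponent bookkeeping you flag ($i-\sum_{j=1}^{i-1}x_j = 1+\#\{j<i : x_j=0\}$) checks out.
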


\begin{proof}
By Proposition~\ref{prop:numcumFG} and Lemma~\ref{lem:fin}, we have
\begin{align}
f_{F_D} (x)
&= \lim_{k \to \infty} \frac{cum_{F_D}\left((\sum_{i=1}^k x_i 2^{k-i})-1 \right)}{cum_{F_D}(2^k-1)}\\
&= \lim_{k \to \infty} \frac{\sum_{i=1}^k x_i \ num_{F_D} \left( \sum_{j=1}^{i-1} x_j 2^{k-j} \right) cum_{F_D}(2^{k-i}-1)}{cum_{F_D}(2^k-1)}\\
% &= \lim_{k \to \infty} \frac{\sum_{i=1}^k x_i \ (2D)^{\sum_{j=1}^{i-1} x_j} (2D+1)^{k-i}}{(2D+1)^k}\\
&= \lim_{k \to \infty} \sum_{i=1}^k x_i \ (2D)^{\sum_{j=1}^{i-1} x_j} (2D+1)^{-i}\\ 
&= \sum_{i=1}^{\infty} x_i \ (2D)^{\sum_{j=1}^{i-1} x_j} (2D+1)^{-i}.
\end{align}
We take $i \in {\mathbb Z}_{>0}$ such that $x_i=1$, and take $k \in {\mathbb Z}_{>0}$ such that $x_{i+k}=1$ and $x_{i+l}=0$ for $0 < l < k$.
By d'Alembert's ratio test, 
we have $|(2D)^{\sum_{j=1}^{i-1+k} x_j} (2D+1)^{-i-k} / ((2D)^{\sum_{j=1}^{i-1} x_j} (2D+1)^{-i})|
%= (2D)^{\sum_{j=i}^{i-1+k} x_j} (2D+1)^{-k} 
\leq (2D/(2D+1))^k < 1$.
Then, the series converges absolutely, and $f_{F_D}$ exists.

We show that $f_{F_D}$ is $L_{1/(2D+1)}$ in Definition~\ref{def:lb}.
Because $D \in {\mathbb Z}_{>0}$, $0 < 1/(2D+1) <1$ and $1/(2D+1) \neq 1/2$.
If $0 \leq x < 1/2$, i.e., $x_1=0$, 
\begin{align}
\frac{1}{2D+1} f_{F_D} (2x) 
&= \frac{1}{2D+1} \sum_{i=1}^{\infty} x_{i+1} \ (2D)^{\sum_{j=1}^{i-1} x_{j+1}} (2D+1)^{-i}\\
&= \frac{1}{2D+1} \sum_{i=2}^{\infty} x_i \ (2D)^{\sum_{j=2}^{i-1} x_j} (2D+1)^{-i+1}\\
&= \sum_{i=1}^{\infty} x_i \ (2D)^{\sum_{j=1}^{i-1} x_j} (2D+1)^{-i}\\
&= f_{F_D} (x).
\end{align}
In contrast, if $1/2 \leq x \leq 1$, when $1= \sum_{i=1}^{\infty} 2^{-i}$, $x_1=1$. 
\begin{align}
&\left( 1- \frac{1}{2D+1} \right) f_{F_D} (2x-1) + \frac{1}{2D+1}\\
&= \frac{1}{2D+1} + \frac{2D}{2D+1} \sum_{i=2}^{\infty} x_i \ (2D)^{\sum_{j=2}^{i-1} x_j} (2D+1)^{-i+1} \\
% &= \frac{1}{2D+1} + 2D \sum_{i=2}^{\infty} x_i \ (2D)^{\sum_{j=2}^{i-1} x_j} (2D+1)^{-i} \\
&= x_1 (2D+1)^{-1} + \sum_{i=2}^{\infty} x_i \ (2D)^{\sum_{j=1}^{i-1} x_j} (2D+1)^{-i} \\
&= \sum_{i=1}^{\infty} x_i \ (2D)^{\sum_{j=1}^{i-1} x_j} (2D+1)^{-i}\\
&= f_{F_D} (x).
\end{align}
Hence, $f_{F_D}=L_{1/(2D+1)}$.

Next, for $G_D$, we have
\begin{align}
f_{G_D} (x)
&= \lim_{k \to \infty} \frac{cum_{G_D}\left((\sum_{i=1}^k x_i 2^{k-i})-1 \right)}{cum_{G_D}(2^k-1)}\\
&= \lim_{k \to \infty} \frac{\sum_{i=1}^k x_i \ num_{G_D} \left( \sum_{j=1}^{i-1} x_j 2^{k-j} \right) cum_{G_D}(2^{k-i}-1)}{cum_{G_D}(2^k-1)}\\
% &= \lim_{k \to \infty} \frac{\sum_{i=1}^k x_i \ (2^D)^{\sum_{j=1}^{i-1} x_j} (2^D+1)^{k-i}}{(2^D+1)^k}\\
% &= \lim_{k \to \infty} \sum_{i=1}^k x_i \ (2^D)^{\sum_{j=1}^{i-1} x_j} (2^D+1)^{-i}\\
&= \sum_{i=1}^{\infty} x_i \ (2^D)^{\sum_{j=1}^{i-1} x_j} (2^D+1)^{-i}.
\end{align}
We take $i \in {\mathbb Z}_{>0}$ such that $x_i=1$, and take $k \in {\mathbb Z}_{>0}$ such that $x_{i+k}=1$ and $x_{i+l}=0$ for $0 < l < k$.
By d'Alembert's ratio test, we obtain $|(2^D)^{\sum_{j=1}^{i-1+k} x_j} (2^D+1)^{-i-k} / ((2^D)^{\sum_{j=1}^{i-1} x_j} (2^D+1)^{-i})|
%= (2^D)^{\sum_{j=i}^{i-1+k} x_j} (2^D+1)^{-k} 
\leq (2^D/(2^D+1))^k < 1$.
Then, the series converges absolutely, and $f_{G_D}$ exists.

We show that $f_{G_D}$ is $L_{1/(2^D+1)}$ in Definition~\ref{def:lb}.
Because $D \in {\mathbb Z}_{>0}$, $0 < 1/(2^D+1) <1$ and $1/(2^D+1) \neq 1/2$.
If $0 \leq x < 1/2$, i.e., $x_1=0$, 
\begin{align}
\frac{1}{2^D+1} f_{G_D} (2x) 
&= \frac{1}{2^D+1} \sum_{i=1}^{\infty} x_{i+1} \ (2^D)^{\sum_{j=1}^{i-1} x_{j+1}} (2^D+1)^{-i}
% &= \frac{1}{2^D+1} \sum_{i=2}^{\infty} x_i \ (2^D)^{\sum_{j=2}^{i-1} x_j} (2^D+1)^{-i+1}\\
% &= \sum_{i=1}^{\infty} x_i \ (2^D)^{\sum_{j=1}^{i-1} x_j} (2^D+1)^{-i}\\
= f_{G_D} (x).
\end{align}
In contrast, if $1/2 \leq x \leq 1$, when $1= \sum_{i=1}^{\infty} 2^{-i}$, $x_1=1$. 
\begin{align}
&\left( 1- \frac{1}{2^D+1} \right) f_{G_D} (2x-1) + \frac{1}{2^D+1}\\
&= \frac{1}{2^D+1} + \frac{2^D}{2^D+1} \sum_{i=2}^{\infty} x_i \ (2^D)^{\sum_{j=2}^{i-1} x_j} (2^D+1)^{-i+1} \\
% &= \frac{1}{2^D+1} + 2^D \sum_{i=2}^{\infty} x_i \ (2^D)^{\sum_{j=2}^{i-1} x_j} (2^D+1)^{-i} \\
&= x_1 (2^D+1)^{-1} + \sum_{i=2}^{\infty} x_i \ (2^D)^{\sum_{j=1}^{i-1} x_j} (2^D+1)^{-i} \\
% &= \sum_{i=1}^{\infty} x_i \ (2^D)^{\sum_{j=1}^{i-1} x_j} (2^D+1)^{-i}\\
&= f_{G_D} (x).
\end{align}
Therefore, $f_{G_D}=L_{1/(2^D+1)}$.
\end{proof}

\begin{rmk}
We can find the box-counting dimension, a fractal dimension, of the limit set of each spatio-temporal pattern.
For $F_D$, it is $\log(2D+1)/\log 2$, and fot $G_D$, it is $\log(2^D+1)/\log 2$.
\end{rmk}

\section{Numerical results}
\label{sec:other}

In the previous section, we showed that the functions, $f_{F_D}$ and $f_{G_D}$, are the singular functions, $L_{1/(2D+1)}$ and $L_{1/(2^D+1)}$, respectively.
In this section, we consider CAs except $F_D$ and $G_D$, and check whether there exists $T$ the function $f_T$ of which is $L_{1/(M+1)}$ for $M \in {\mathbb Z}_{>1}$.
If a spatio-temporal pattern of a CA is self-similar, the pattern $\{T^n u_o\}_{n=2^k}^{2^{k+1}-1}$ consists of $M$ copies of $\{T^n u_o\}_{n=0}^{2^k-1}$, and then $cum_T (2^k-1) / cum_T (2^{k+1}-1) = 1/(M+1)$ for each $k \in {\mathbb Z}_{> 0}$.
Because $L_{1/(M+1)}(1/2) = 1/(M+1)$, it is possible that if a CA $T$ satisfies $cum_T (2^k-1) / cum_T (2^{k+1}-1) = 1/(M+1)$, the CA may give $f_T = L_{1/(M+1)}$.
To conduct some numerical experiments, we introduce the following function $f_{T, k}$ for a CA $T$ and a finite integer $k$, which is a function $f_T$ before taking the limit of $k$ in Definition~\ref{dfn:lim}. 

\begin{dfn}
\label{dfn:limF}
For a CA $T$, we define a function $f_{T, k}: [0, 1] \to [0, 1]$ by
\begin{align}
\label{eq:orgF}
f_{T, k} (x) &:= \frac{cum_T\left((\sum_{i=1}^k x_i 2^{k-i})-1 \right)}{cum_T(2^k-1)}
\end{align}
for $x = \sum_{i=1}^k (x_i/2^i) \in [0,1]$.
\end{dfn}

Section~\ref{subsec:C_D} provides numerical results on CAs in ${\mathcal C}(D)$ for $D \leq 5$.
We check whether CAs $T$ except $F_D$ and $G_D$ satisfy $f_{T, k} (1/2) = 1/(M+1)$ for each $k \in {\mathbb Z}_{> 0}$. 
Section~\ref{subsec:nonC_D} provides numerical results on linear symmetric $2$-state radius-$1$ CAs on the triangular and hexagonal lattices.
We also check whether there exist CAs the resulting functions of which are $L_{1/(M+1)}$.

\subsection{Numerical results for CAs $\in {\mathcal C}(D) \backslash \{F_D, G_D\}$ for $D \leq 5$}
\label{subsec:C_D}

Recall that the set of $D$-dimensional linear symmetric $2$-state radius-$1$ CAs is ${\mathcal C}(D)$.
A $D$-dimensional linear symmetric $2$-state radius-$1$ CA $(\{0, 1\}^{{\mathbb Z}^D}, T)$ is given by
\begin{align}
(T u)_{\textit{\textbf i}} 
&= \! \! \! \! \sum_{e_1=-1}^1 \sum_{e_2=-1}^1 \! \! \cdots \! \! \sum_{e_D=-1}^1 a_{\sum_{j=1}^D |e_j|} \ u_{i_1+e_1, i_2+e_2, \ldots, i_D+e_D} \! \! \! \! \pmod 2,
\end{align}
for ${\textit{\textbf i}}=(i_1, i_2, \ldots, i_D) \in {\mathbb Z}^D$ and $a_{\sum_{j=1}^D |e_j|} \in \{0, 1\}$ for $-1 \leq e_1, e_2, \ldots, e_D \leq 1$.
By Proposition~\ref{prop:2d+1}, the number of CAs belonging to ${\mathcal C}(D)$ is $2^{D+1}$.

% In the case of $D=1$, the number of CAs belonging to ${\mathcal C}(1)$ is $2^{1+1}=4$.
For the one-dimensional case, a CA $(\{0, 1\}^{\mathbb Z}, T) \in {\mathcal C}(1)$ is given by
\begin{align}
(T u)_i &= a_1 u_{i-1} + a_0 u_i + a_1 u_{i+1} \pmod 2,
\end{align}
for $i \in {\mathbb Z}$ and $a_1, a_0 \in \{0, 1\}$.
Four CAs in ${\mathcal C}(1)$ are given by
$(a_1, a_0) = (0, 0), (0, 1), (1, 0), (1, 1)$, and Table~\ref{tab:1dsymCAs} shows the local rules.
By Remark~\ref{rmk:triv}, for $(a_1, a_0)=(0, 0)$ and $(0, 1)$, the orbits are trivial.
For $(a_1, a_0)=(1, 0)$ and $(1, 1)$, the CAs have non-trivial orbits.
The case $(a_1, a_0)=(1, 0)$ provides $F_1$ and $G_1$, that is, Rule $90$.
The case $(a_1, a_0)=(1, 1)$ provides Rule $150$, and the resulting function $f_{S150}$ is known to be a singular function that is not $L_{\alpha}$ \cite{kawa2022}.
Hence, in ${\mathcal C}(1)$ except $F_1$ and $G_1$, there are no CAs the resulting function $f_T$ of which equals $L_{\alpha}$.

\begin{table}[hbtp]
\caption{CAs in ${\mathcal C}(1)$}
\label{tab:1dsymCAs}
\centering
\begin{tabular}{l l l}
\hline
$(a_1, a_0)$ & local rule & Wolfram number\\
\hline \hline
$(0, 0)$ & $0$ & Rule $0$\\
$(0, 1)$ & $u_i$ & Rule $240$\\
$(1, 0)^{\ast 1}$ & $u_{i-1} + u_{i+1} \pmod 2$ & Rule $90$\\
$(1, 1)$ & $u_{i-1} + u_i + u_{i+1} \pmod 2$ & Rule $150$\\
\hline
\end{tabular}\\ \quad \\
(The CA with $^{\ast 1}$ provides $F_1$ and $G_1$.)
\end{table}

Below, we describe our numerical study for CAs in ${\mathcal C}(D)$ for $D=2$, $3$, $4$, and $5$.

\begin{itemize}
\item[$(i)$] 
A CA $(\{0, 1\}^{{\mathbb Z}^2}, T) \in {\mathcal C}(2)$ is given by
\begin{align}
(T u)_{\textit{\textbf i}} 
&= \sum_{e_1=-1}^1 \sum_{e_2=-1}^1 a_{|e_1|+|e_2|} \ u_{i_1+e_1, i_2+e_2} \pmod 2,
\end{align}
for ${\textit{\textbf i}}=(i_1, i_2) \in {\mathbb Z}^2$ and $a_{|e_1|+|e_2|} \in \{0, 1\}$ for $-1 \leq e_1, e_2 \leq 1$.
The number of CAs in ${\mathcal C}(2)$ is $2^{2+1}=8$, and
Table~\ref{tab:2dsymCAs} shows their local rules.
The case $(a_2, a_1, a_0)=(0, 1, 0)$ provides $F_2$, and $(a_2, a_1, a_0)=(1, 0, 0)$ provides $G_2$.
By numerical experiments, for the other six CAs, we obtained $1/ f_{T, k}(1/2) \notin {\mathbb Z}_{>2}$ for $2 \leq k \leq 8$.
We performed experiments only up to $k=8$ owing to computational limitations.

\begin{table}[hbtp]
\caption{CAs in ${\mathcal C}(2)$}
\label{tab:2dsymCAs}
\centering
\begin{tabular}{l l}
\hline
$(a_2, a_1, a_0)$ & local rule $\pmod 2$\\
\hline \hline
$(0, 0, 0)$ & $0$\\
$(0, 0, 1)$ & $u_{i_1, i_2}$\\
$(0, 1, 0)^{\ast 1}$ & $u_{i_1-1, i_2} + u_{i_1, i_2-1} + u_{i_1, i_2+1} + u_{i_1+1, i_2}$\\
$(0, 1, 1)$ & $u_{i_1-1, i_2} + u_{i_1, i_2-1} + u_{i_1, i_2} + u_{i_1, i_2+1} + u_{i_1+1, i_2}$\\
$(1, 0, 0)^{\ast 2}$ & $u_{i_1-1, i_2-1} + u_{i_1-1, i_2+1} + u_{i_1+1, i_2-1} + u_{i_1+1, i_2+1}$\\
$(1, 0, 1)$ & $u_{i_1-1, i_2-1} + u_{i_1-1, i_2+1} + u_{i_1, i_2} + u_{i_1+1, i_2-1} + u_{i_1+1, i_2+1}$\\
$(1, 1, 0)$ & $u_{i_1-1, i_2-1} + u_{i_1-1, i_2+1} + u_{i_1+1, i_2-1} + u_{i_1+1, i_2+1}$\\
&\quad $+ u_{i_1-1, i_2} + u_{i_1, i_2-1} + u_{i_1, i_2+1} + u_{i_1+1, i_2}$\\
$(1, 1, 1)$ & $u_{i_1-1, i_2-1} + u_{i_1-1, i_2+1} + u_{i_1+1, i_2-1} + u_{i_1+1, i_2+1}$\\ 
&\quad $+ u_{i_1-1, i_2} + u_{i_1, i_2-1} + u_{i_1, i_2+1} + u_{i_1+1, i_2} + u_{i_1, i_2}$\\
\hline
\end{tabular}\\ \quad \\
(The CA with $^{\ast 1}$ provides $F_2$, and the CA with $^{\ast 2}$ provides $G_2$.)
\end{table}

\item[$(ii)$]
A CA $(\{0, 1\}^{{\mathbb Z}^3}, T) \in {\mathcal C}(3)$ is given by
\begin{align}
(T u)_{\textit{\textbf i}} 
&= \sum_{e_1=-1}^1 \sum_{e_2=-1}^1 \sum_{e_3=-1}^1 a_{|e_1|+|e_2|+ |e_3|} \ u_{i_1+e_1, i_2+e_2, i_3+e_3} \pmod 2,
\end{align}
for ${\textit{\textbf i}}=(i_1, i_2, i_3) \in {\mathbb Z}^3$ and $a_{|e_1|+|e_2|+|e_3|} \in \{0, 1\}$ for $-1 \leq e_1, e_2, e_3 \leq 1$.
Thus, there exist $2^4$ CAs in ${\mathcal C}(3)$.
A CA given by $(a_3, a_2, a_1, a_0)=(0, 0, 1, 0)$ is $F_3$, and a CA given by $(a_3, a_2, a_1, a_0)=(1, 0, 0, 0)$ is $G_3$. 
By numerical experiments, for the other fourteen CAs, we obtained $1/f_{T, k}(1/2) \notin {\mathbb Z}_{>2}$ for $2 \leq k \leq 6$.
Owing to computational limitations, experiments were performed only up to $k=6$.

\item[$(iii)$]
A CA $(\{0, 1\}^{{\mathbb Z}^4}, T) \in {\mathcal C}(4)$ is given by
\begin{align}
(T u)_{\textit{\textbf i}} 
&= \sum_{e_1=-1}^1 \sum_{e_2=-1}^1 \sum_{e_3=-1}^1 \sum_{e_4=-1}^1 a_{\sum_{j=1}^4 |e_j|} \ u_{i_1+e_1, i_2+e_2, i_3+e_3, i_4+e_4} \pmod 2,
\end{align}
for ${\textit{\textbf i}}=(i_1, i_2, i_3, i_4) \in {\mathbb Z}^4$ and $a_{\sum_{j=1}^4 |e_j|} \in \{0, 1\}$ for $-1 \leq e_1, e_2, e_3, e_4 \leq 1$.
Thus, there exist $2^5$ CAs in ${\mathcal C}(4)$.
A CA given by $(a_4, a_3, a_2, a_1, a_0)=(0, 0, 0, 1, 0)$ is $F_4$, and a CA given by $(a_4, a_3, a_2, a_1, a_0)=(1, 0, 0, 0, 0)$ is $G_4$. 
By numerical experiments, we found that the other $30$ CAs satisfy
$1/f_{T, k}(1/2) \notin {\mathbb Z}_{>2}$ for $k = 3$ and $4$.
Owing to computational limitations, we performed experiments only up to $k=4$.
In contrast to the other dimensional cases, the result does not hold only in the case that $D=4$ for $k=2$.

\item[$(iv)$]
A CA $(\{0, 1\}^{{\mathbb Z}^5}, T) \in {\mathcal C}(5)$ is given by
\begin{align}
(T u)_{\textit{\textbf i}} 
&= \!\!\!\! \sum_{e_1=-1}^1 \sum_{e_2=-1}^1 \!\! \cdots \!\! \sum_{e_5=-1}^1 a_{\sum_{j=1}^5 |e_j|} \ u_{i_1+e_1, i_2+e_2, i_3+e_3, i_4+e_4, i_5+e_5} \pmod 2,
\end{align}
for ${\textit{\textbf i}}=(i_1, i_2, i_3, i_4, i_5) \in {\mathbb Z}^5$ and $a_{\sum_{j=1}^5 |e_j|} \in \{0, 1\}$ for $-1 \leq e_1, e_2, e_3, e_4, e_5 \leq 1$.
There exist $2^6$ CAs in ${\mathcal C}(5)$.
A CA given by $(a_5, a_4, a_3, a_2, a_1, a_0)=(0, 0, 0, 0, 1, 0)$ is $F_5$, and a CA given by $(a_5, a_4, a_3, a_2, a_1, a_0)=(1, 0, 0, 0, 0, 0)$ is $G_5$. 
By numerical experiments, for the other $62$ CAs hold
$1/f_{T, k}(1/2) \notin {\mathbb Z}_{>2}$ for $k = 2$ and $3$.
Owing to computational limitations, experiments were performed only up to $k=3$.
\end{itemize}

Therefore, a CA $T \in {\mathcal C}(D) \backslash \{F_D, G_D\}$ satisfies that $f_{T, k} (x) \neq L_{1/(M+1)}(x)$ for $x = \sum_{i=1}^k (x_i/2^i)$ and any $M \in {\mathbb Z}_{>1}$, if $D=2$ and $2 \leq k \leq 8$, if $D=3$ and $2 \leq k \leq 6$, if $D=4$ and $3 \leq k \leq 4$, and if $D=5$ and $2 \leq k \leq 3$. %（$k=4$計算中）
Although this result implies the possibility of satisfying $f_T \neq L_{1/(M+1)}$ for any $M \in {\mathbb Z}_{>1}$, this is not guaranteed, because $f_{T, k}$ could converge to $f_T$ when taking the limit of $k$.

\subsection{Numerical results for CAs on the triangular and hexagonal lattices}
\label{subsec:nonC_D}

It has been established that there exist only three regular tessellations of the plane, the triangular lattice (Figure~\ref{fig:hexa_lattice}~(a)), the square lattice, and the hexagonal lattice (Figure~\ref{fig:hexa_lattice}~(b)).
In this section, we consider linear symmetric $2$-state radius-$1$ CAs on the triangular and hexagonal lattices.

\begin{figure}[h]
\begin{minipage}[b]{0.45\linewidth}
\centering
\includegraphics[width=.65\linewidth]{./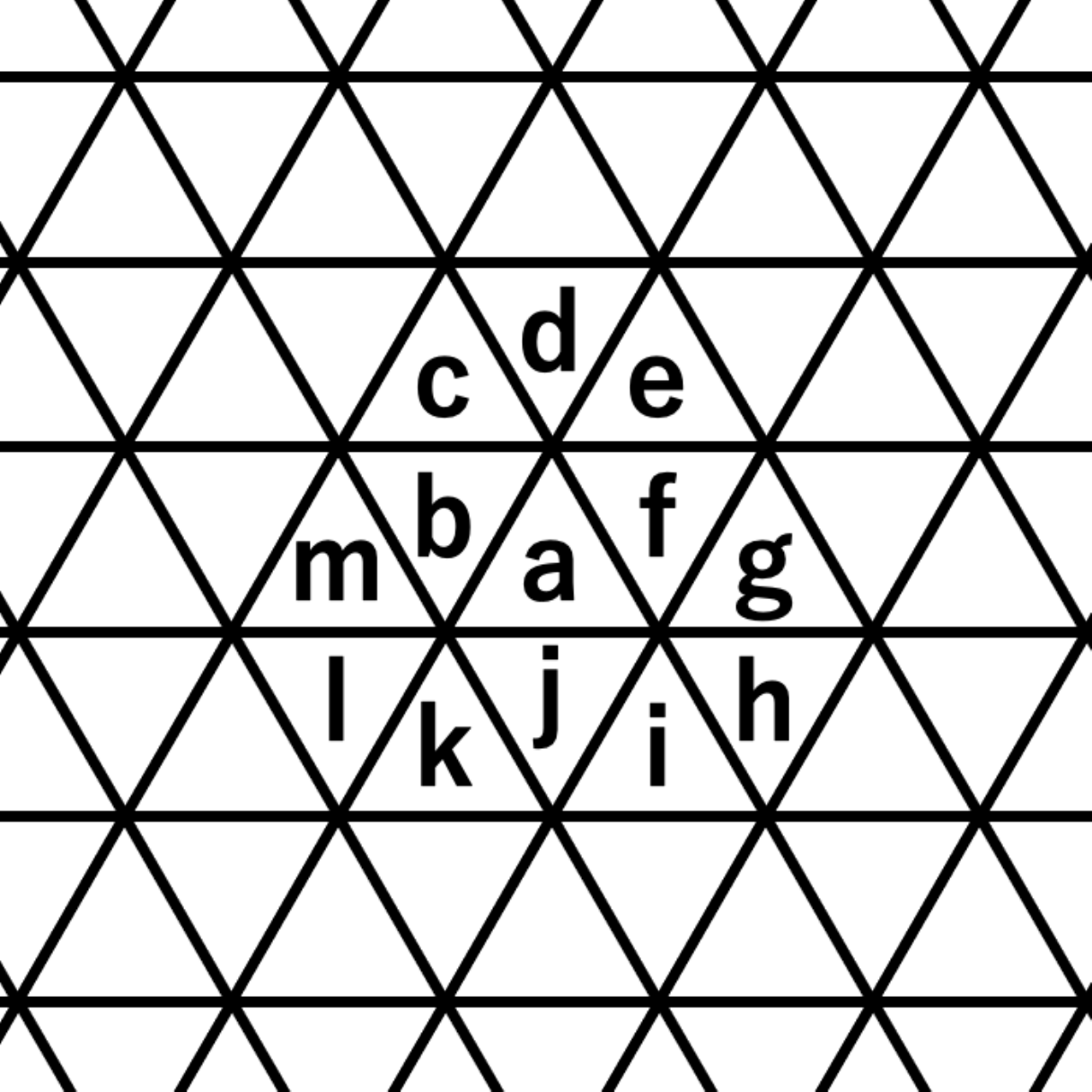}\\
(a) The triangular lattice
\end{minipage}
\begin{minipage}[b]{0.05\linewidth}
\quad
\end{minipage}
\begin{minipage}[b]{0.45\linewidth}
\centering
\includegraphics[width=.65\linewidth]{./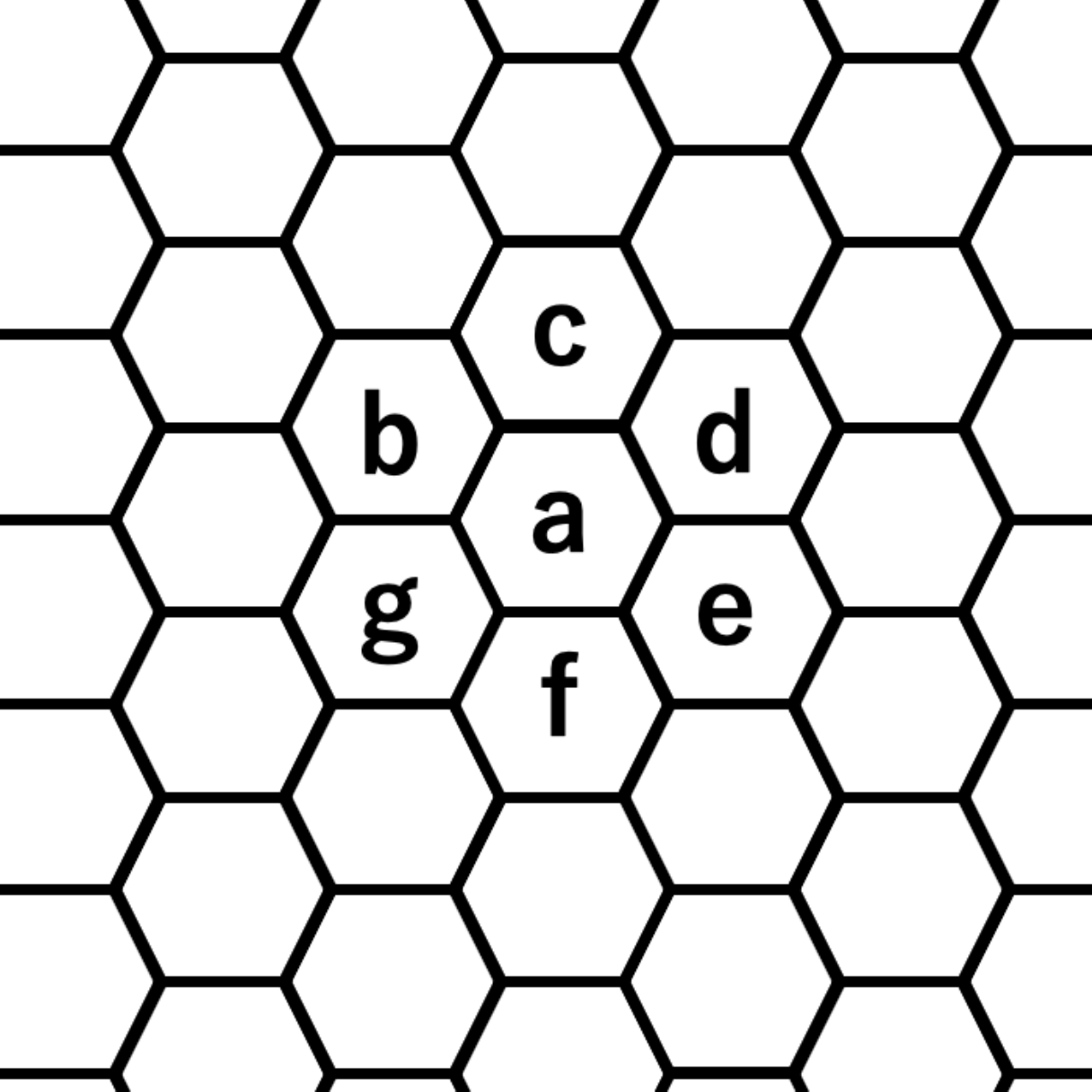}\\
(b) The hexagonal lattice
\end{minipage}
\caption{Two of the three regular lattices that can tessellate the planes, excepting the square lattice}
\label{fig:hexa_lattice}
\end{figure}

On the triangular lattice, local rules of linear symmetric $2$-state radius-$1$ CAs depend on thirteen states of neighbor cells (in Figure~\ref{fig:hexa_lattice}~(a), the cells are given from $a$ to $m$), and Table~\ref{tab:triCAs} shows the fourteen triangular CAs, from $R_0$ to $R_{13}$. 
By numerical experiments, we found that for the fourteen CAs, $T$ satisfies $1/f_{T, k}(1/2) \notin {\mathbb Z}_{>2}$ for $3 \leq k \leq 7$.
Owing to computational limitations, experiments were performed only up to $k=7$.
Hence, a linear symmetric $2$-state radius-$1$ triangular CA satisfies that $f_{T, k} (x) \neq L_{1/(M+1)}(x)$ for $3 \leq k \leq 7$, $x = \sum_{i=1}^k (x_i/2^i)$, and any $M \in {\mathbb Z}_{>1}$. 

\begin{table}[hbtp]
\caption{Linear symmetric $2$-state radius-$1$ triangular CAs}
\label{tab:triCAs}
\centering
\begin{tabular}{c l}
\hline
CA & local rule $\pmod 2$\\
\hline 
$R_0$ & $0$\\
$R_1$ & $a$\\
$R_2$ & $b + f + j$\\
$R_3$ & $a + b + f + j$\\
$R_4$ & $d + h + l$\\
$R_5$ & $a + d + h + l$\\
$R_6$ & $c + e + g + i + k + m$\\
$R_7$ & $a + c + e + g + i + k + m$\\
$R_8$ & $(b + f + j) + (c + e + g + i + k + m)$\\
$R_9$ & $a + (b + f + j) + (c + e + g + i + k + m)$\\
$R_{10}$ & $(d + h + l) + (c + e + g + i + k + m)$\\
$R_{11}$ & $a + (d + h + l) + (c + e + g + i + k + m)$\\
$R_{12}$ & $(b + f + j) + (d + h + l) + (c + e + g + i + k + m)$\\
$R_{13}$ & $a + (b + f + j) + (d + h + l) + (c + e + g + i + k + m)$\\
\hline
\end{tabular}
\end{table}

On the hexagonal lattice, local rules of linear symmetric $2$-state radius-$1$ CAs depend on seven states of neighbor cells (in Figure~\ref{fig:hexa_lattice}~(b), the cells are given from $a$ to $g$), and Table~\ref{tab:hexaCAs} shows the six hexagonal CAs, from $H_0$ to $H_5$. 

The results of numerical experiments showed that if $T=H_0$, $H_1$, $H_4$, or $H_5$, we have $1/f_{T, k}(1/2) \notin {\mathbb Z}_{>2}$ for $3 \leq k \leq 8$.
Owing to computational limitations, experiments were performed only up to $k=8$. 
In contrast, for $H_2$ and $H_3$, we obtained the following results.

\begin{itemize}
\item[$(i)$] 
Figure~\ref{fig:h2h3}~(a) shows the spatio-temporal pattern of $H_2$ from time step $n=144$ to $255$ every $16$ steps, and Figure~\ref{fig:hexa_numbers}~(a) gives the dynamics of the number of nonzero states for time step $n=0$ to $2^8-1$.
We obtain $num_{H2}(n) = 4^{\sum_{i=1}^k x_i}$, and $cum_{H2}(2^k-1) = 5^{\sum_{i=1}^k x_i}$ for time step $n = \sum_{i=0}^{k-1} x_{k-i} 2^i$. 
Hence, we have $f_{H2} = L_{1/5}$.
\item[$(ii)$] % $H_3$ （空間図形がRule $90$, $L_{1/4}$になる）
Figure~\ref{fig:h2h3}~(b) shows the spatio-temporal pattern of $H_3$ from time step $n=144$ to $255$ every $16$ steps, and it may be observed that the spatial pattern for time step $n=2^k-1$ for $k \in {\mathbb Z}_{\geq 0}$ is similar to the spatio-temporal pattern of the one-dimensional CA Rule $90$. 
Figure~\ref{fig:hexa_numbers}~(b) gives the dynamics of the number of nonzero states for time step $n=0$ to $2^8-1$.
We obtain $num_{H3}(n) = 3^{\sum_{i=1}^k x_i}$, and $cum_{H3}(2^k-1) = 4^{\sum_{i=1}^k x_i}$. 
Thus, we have $f_{H3} = L_{1/4}$.
\end{itemize}

\begin{table}[hbtp]
\caption{Linear symmetric $2$-state radius-$1$ hexagonal CAs}
\label{tab:hexaCAs}
\centering
\begin{tabular}{c l}
\hline
CA & local rule $\pmod 2$\\
\hline 
$H_0$ & $0$\\
$H_1$ & $a$\\
$H_2$ & $b + d + f$\\
$H_3$ & $a + b + d + f$\\
$H_4$ & $b+c+d+e+f+g$\\
$H_5$ & $a+b+c+d+e+f+g$\\
\hline
\end{tabular}
\end{table}

\begin{figure}[h]
\centering
\includegraphics[width=1.\linewidth]{./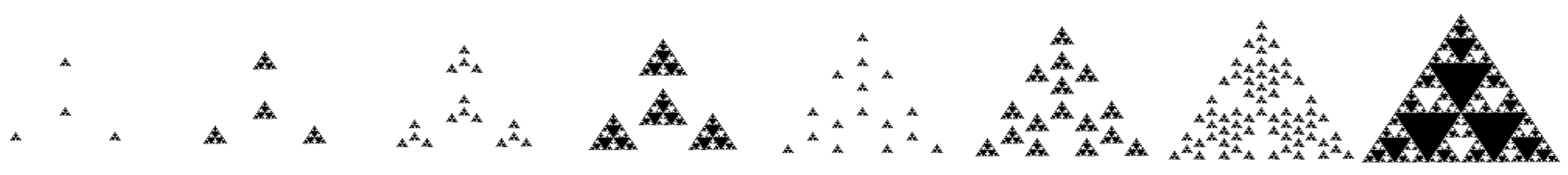}\\
(a) $\{H_2^{2^7+2^4m-1} u_o\}_{m=1}^{2^3}$\\
\includegraphics[width=1.\linewidth]{./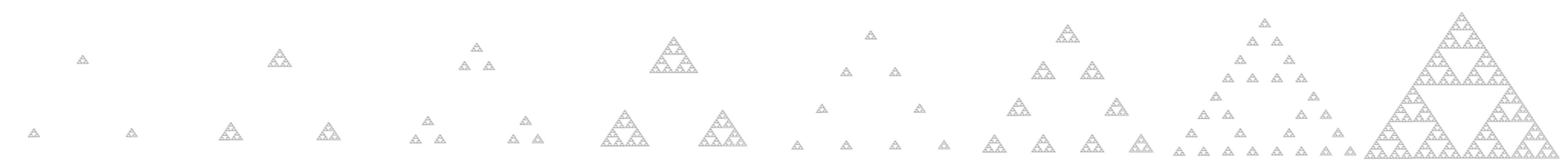}\\
(b)  $\{H_3^{2^7+2^4m-1} u_o\}_{m=1}^{2^3}$
\caption{Spatio-temporal patterns of $H_2$ and $H_3$}
\label{fig:h2h3}
\end{figure}

\begin{figure}[h]
\begin{minipage}[b]{0.45\linewidth}
\centering
\includegraphics[width=1.\linewidth]{./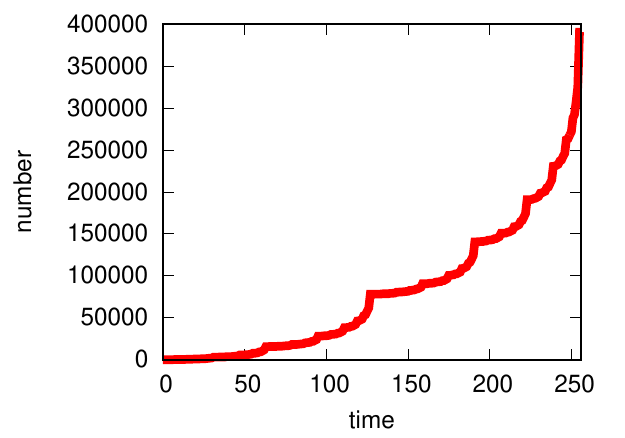}\\
(a) $\{cum_{H2} (n)\}_{n=0}^{2^8-1}$
\end{minipage}
\begin{minipage}[b]{0.05\linewidth}
\quad
\end{minipage}
\begin{minipage}[b]{0.45\linewidth}
\centering
\includegraphics[width=1.\linewidth]{./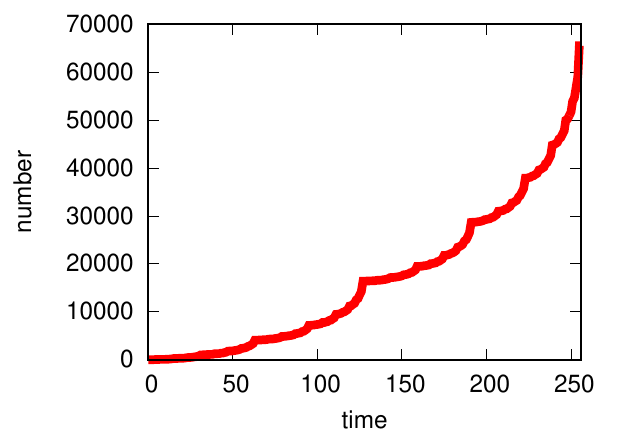}\\
(b) $\{cum_{H3} (n)\}_{n=0}^{2^8-1}$
\end{minipage}
\caption{Dynamics of the number of nonzero states for hexagonal CAs, $H_2$ and $H_3$}
\label{fig:hexa_numbers}
\end{figure}

\section{Concluding remarks}
\label{sec:cr}

In this paper, we have provided functions obtained by the number of nonzero states of the spatio-temporal patterns of $F_D$ and $G_D$ for $D \in {\mathbb Z}_{> 0}$, and we have shown that these functions equal Salem's singular function $L_{\alpha}$ with $\alpha = 1/(2D+1)$ and $1/(2^D+1)$, respectively.
This result includes our previous results for the one-dimensional elementary CA Rule $90$ the function of which is $L_{1/3}$, and a two-dimensional elementary CA with the function $L_{1/5}$. 
Also, this result indicates that there exist CAs the resulting function of which is $L_{1/M}$ for any odd number $M$ greater than or equal to $3$. 
In Section~\ref{sec:other}, we have provided numerical results. 
For $2 \leq D \leq 5$, there are no CAs $T \in {\mathcal C}(D) \backslash \{F_D, G_D\}$ satisfying $f_{T, k} (x) = L_{1/M}(x)$ for $x = \sum_{i=1}^k (x_i/2^i)$ and any $M \in {\mathbb Z}_{>2}$.
We also studied CAs on the triangular and hexagonal lattices.
On the triangular lattice, there are no CAs related to Salem's singular function.
On the hexagonal lattice, we found two CAs related to $L_{1/5}$ and $L_{1/4}$.

In future work, we plan to study a CA $T$ the function $f_T$ of which equals $L_{1/M}$ for an even number $M$.
In this paper, we have shown that we have a hexagonal CA the function of which is $L_{1/4}$ when $M=4$.
However, it remains unclear whether there exists a CA whose resulting function is $L_{1/M}$ for every even number $M$ greater than $4$.
(Given that if $M=2$, the condition $\alpha \neq 1/2$ is not satisfied, the smallest even number $M$ is $4$.)
We also plan to study other CAs.

\section*{Acknowledgment}
This work was partly supported by a Grant-in-Aid for Scientific Research (18K13457, 22K03435) funded by the Japan Society for the Promotion of Science.

\section*{Data Availability Statement}
The data that supports the findings of this work are available within this paper.

%\bibliographystyle{unsrt}
%\bibliography{papers}

\end{document}